\newtheorem{defi}{\bf D\scriptsize EFINITION \normalsize}
\newtheorem{theorem}{\bf T\scriptsize HEOREM \normalsize}
\newtheorem{lm}{\bf L\scriptsize EMMA \normalsize}
\newtheorem{dk}{\bf C\scriptsize OROLLARY \normalsize}
\newtheorem{rem}{\bf R\scriptsize EMARK \normalsize}
\newtheorem{exa}{\bf E\scriptsize XAMPLE \normalsize}
\newtheorem{pro}{\bf P\scriptsize ROBLEM \normalsize}
\newtheorem{prop}{\bf P\scriptsize ROPOSITION \normalsize}
\newtheorem{no}{\bf N\scriptsize OTE \normalsize}
\newenvironment{remark}{\begin{rem}\rm}{\end{rem}}
\def\kopr{\hfill\raisebox{3pt}{\framebox{$\star$}}}
\newenvironment{example}{\begin{exa}\rm}{$\kopr$\end{exa}}
\newenvironment{definition}{\begin{defi}\rm}{\end{defi}}
\newenvironment{lemma}{\begin{lm}\it}{\end{lm}}
\newenvironment{corollary}{\begin{dk}\it}{\end{dk}}
\newenvironment{proposition}{\begin{prop}\it}{\end{prop}}
\newcommand{\inte}[2]{\int \limits_{#1}^{#2}}
\newcommand{\nadsebou}[2]{\begin{array}{c} #1 \\ #2 \end{array}}
\newcommand{\hzav}[1]{\left[ #1 \right]}
\newcommand{\zav}[1]{\left( #1 \right)}
\newcommand{\abs}[1]{\left| #1 \right|}
\newcommand{\sgn}{{\rm sgn} }
\newcommand{\R}{{\mathbb{R}}}
\newcommand{\dd}{{\rm d}}
\newcommand{\ii}{{\rm i}}
\begin{document}
\title{Pedal coordinates, solar sail orbits, Dipole drive and other force problems}
 \author{Petr Blaschke}
 \thanks{}
\address{ Mathematical Institute, Silesian University in Opava, Na Rybnicku 1, 746 01 Opava, Czech Republic}

\email{Petr.Blaschke@math.slu.cz}
\begin{abstract} 
It was shown that pedal coordinates provides natural framework in which to study force problems of classical mechanics in the plane. A trajectory of a test particle under the influence of central and Lorentz-like forces can be translated into pedal coordinates at once without the need of solving any differential equation. We will generalize this result to cover more general  force laws and also show an advantage of pedal coordinates in certain variational problems.  These will enable us to link together many dynamical systems as well as problems of calculus of variation.
Finally -- as an illustrative example -- we will apply obtained results to compute orbits of Solar sail and Dipole drive.
\end{abstract}
\maketitle
\section{Introduction}\label{Intro}
Even when dealing with a differential equation that has solutions expressible in terms of well known functions (or even elementary functions), the complexity of the resulting formulas may be so high to yield surprisingly little information about the solution itself. In essence, there is a little difference between complicated analytical solution and a numerical one. On the other hand, a sufficiently simple analytical solution can provide great insight into the problem.

A rather revealing example of this sort is the following innocently looking exercise from calculus of variation:
\\

\emph{Find a curve of a given length fixed on both ends that sweeps maximal volume when rotated around the $x$-axis.}
\\ 

That is, we must find a function $y\equiv y(x)$ that maximizes
$$
L[y]:=\inte{-1}{1} \pi y^2{\rm d}x,
$$
under conditions:
$$
y(-1)=y(1)=0,\qquad \inte{-1}{1}\sqrt{1+{y^\prime}^2}{\rm d}x=l,
$$
where $l$ is a given length.

Since the functional does not depend on $x$ explicitly, we can use the Beltrami identity to reduce the order of Euler-Lagrange equation to one. (The problem actually appears so simple that -- some years ago -- I have set it to my undergraduate students, thinking this would be an easy exercise.) But, in fact, the solution is expressible only implicitly in term of Incomplete elliptic integrals:
$$
\sqrt{\frac{\lambda-C}{\pi}}F\zav{y\sqrt{\frac{\pi}{C+\lambda}},k}-\sqrt{\frac{\lambda-C}{\pi}}E\zav{y\sqrt{\frac{\pi}{C+\lambda}},k}+Cy=x-d.
$$

The resulting curve is known as ``the elastic curve'', since it represent the shape of an elastic rod which bends as its two ends are brought close together.

The formula is not very long and the special functions involved are well known, but can any one guess from it, how the curve actually looks like?

A much more advantageous approach to this problem is \textit{not to solve it at all.} In particular, one must avoid the temptation to use the Beltrami identity, since the second order Euler-Lagrange equation, \textit{left as it is}, provides much better description of the resulting curve:
$$
2\pi y=\lambda \frac{y''}{\zav{1+{y'}^2}^\frac32},
$$
where $\lambda$ is the Lagrange multiplier.

This can be readily interpreted since the formula on the right hand side is nothing else than the curve's (oriented) curvature $\kappa$. Thus the equation above tells us that the solution's curvature is proportional to $y$: 
$$
 \kappa\propto y.
$$ 

Hence, we can easily see, that the curve must start on the $x$-axis ($y=0$) flat as a line and curves more and more as it moves away from it up to some critical distance when the curvature is so large that the curve is turned back and in perfect symmetry moves towards the $x$ axis flatter and flatter as she goes until she hits the $x$-axis again flat as a line. In other words, it must look like this:
\begin{center}
\includegraphics[scale=0.5,clip, trim=0.5cm 14cm 5.5cm 1cm]{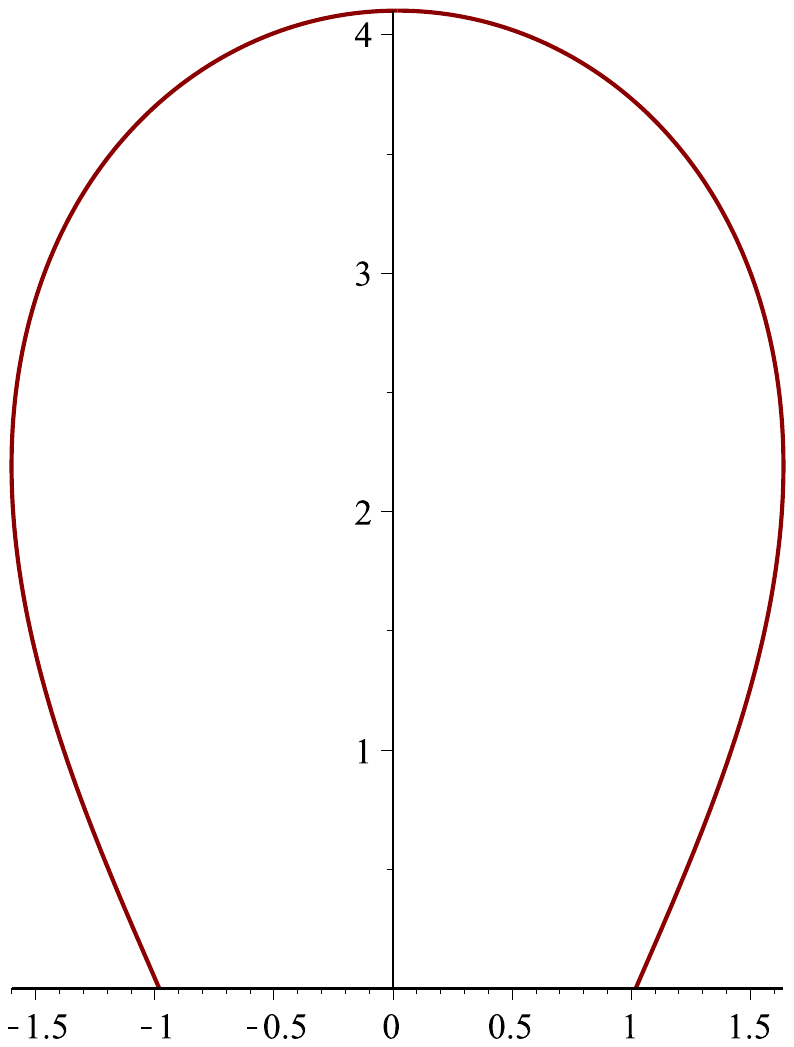}
\end{center}
(The curve is reminiscent to a tennis racket. Perhaps, during a swing, it is also somehow advantageous to maximize the volume of air, the racket covers.)
\bigskip

This example suggests that instead of focusing on solutions, much can be achieved by choosing the ``right'' coordinates -- $(\kappa, y)$ in this case.

In \cite{Blaschke6} it was shown that similarly ``right'' coordinates for solving certain  force problems in a plane, i.e. dynamical systems of the form
\begin{equation*}
{\bf \ddot x}=F({\bf x},{\bf \dot x}),\qquad {\bf x}\in\mathbb{R}^2,
\end{equation*}
are so-called \textit{pedal coordinates} (\cite{Yates,Edwards}), where the position of a point ${\bf x}$ on a given curve $\gamma$ is determined by two numbers $(r,p)$:  the distance $r$ of ${\bf x}$ from a given point (the so-called pedal point) and the distance $p$ of the \textit{tangent} of $\gamma$ at ${\bf x}$.

More precisely, the following holds:
\begin{theorem}\label{T1}(From \cite{Blaschke6}) Consider a dynamical system:
\begin{equation}\label{dynsys}
{\bf \ddot x}=F^\prime\zav{r^2}{\bf x}+2 G^\prime\zav{r^2}{\bf \dot x}^\perp,\qquad r:=\abs{{\bf x}},\qquad {\bf \dot x}^\perp\cdot {\bf \dot x}=0,
\end{equation}
describing an evolution of a test particle (with position ${\bf x}$ and velocity ${\bf \dot x}$) in the plane in the presence of central $F$ and Lorentz like $G$ potential. The quantities:
$$
L={\bf x}\cdot {\bf \dot x}^\perp+G\zav{r^2}, \qquad c=\abs{{\bf \dot x}}^2-F\zav{r^2},
$$
are conserved in this system.

Then the curve traced by ${\bf x}$ is given in pedal coordinates by
\begin{equation}
\frac{\zav{L-G(r^2)}^2}{p^2}=F(r^2)+c,
\end{equation}
with the pedal point at the origin.
\end{theorem}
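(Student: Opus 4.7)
The plan is to verify the two conservation laws along the flow, then match the conserved quantities with the geometric meaning of the pedal coordinate $p$.

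First I would check the conservation of $c$ and $L$ by direct differentiation. For $c$, compute $\tfrac{d}{dt}|\dot{\mathbf{x}}|^2=2\dot{\mathbf{x}}\cdot\ddot{\mathbf{x}}$; using the equation of motion, the Lorentz-like term $2G'(r^2)\dot{\mathbf{x}}^\perp$ drops out because $\dot{\mathbf{x}}\cdot\dot{\mathbf{x}}^\perp=0$, leaving only $2F'(r^2)\mathbf{x}\cdot\dot{\mathbf{x}}$, which exactly cancels $\tfrac{d}{dt}F(r^2)$. For $L$, differentiate $\mathbf{x}\cdot\dot{\mathbf{x}}^\perp$: the product rule gives $\dot{\mathbf{x}}\cdot\dot{\mathbf{x}}^\perp+\mathbf{x}\cdot\ddot{\mathbf{x}}^\perp=\mathbf{x}\cdot\ddot{\mathbf{x}}^\perp$; applying $\perp$ to the equation of motion and using $\mathbf{x}\cdot\mathbf{x}^\perp=0$ together with $(\dot{\mathbf{x}}^\perp)^\perp=-\dot{\mathbf{x}}$ yields $-2G'(r^2)\mathbf{x}\cdot\dot{\mathbf{x}}$, which is precisely $-\tfrac{d}{dt}G(r^2)$.

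Next I would recall the geometric meaning of the pedal coordinate: $p$ is the distance from the pedal point (origin) to the tangent line at $\mathbf{x}$. Since the unit normal to the trajectory is $\dot{\mathbf{x}}^\perp/|\dot{\mathbf{x}}|$, projecting the position vector onto it gives
\[
p=\frac{|\mathbf{x}\cdot\dot{\mathbf{x}}^\perp|}{|\dot{\mathbf{x}}|},\qquad\text{hence}\qquad p^2|\dot{\mathbf{x}}|^2=(\mathbf{x}\cdot\dot{\mathbf{x}}^\perp)^2.
\]

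Finally I would combine the two ingredients. The conservation of $L$ gives $\mathbf{x}\cdot\dot{\mathbf{x}}^\perp=L-G(r^2)$, and the conservation of $c$ gives $|\dot{\mathbf{x}}|^2=F(r^2)+c$. Substituting both into the squared pedal identity produces
\[
p^2\bigl(F(r^2)+c\bigr)=\bigl(L-G(r^2)\bigr)^2,
\]
which, after dividing by $p^2$, is exactly the claimed pedal equation.

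There is no real obstacle here: the work is essentially a one-line computation once the pedal identity $p|\dot{\mathbf{x}}|=|\mathbf{x}\cdot\dot{\mathbf{x}}^\perp|$ is in hand. The only point requiring mild care is the consistency of sign conventions for the $\perp$ rotation, which is why $L$ is defined using $\mathbf{x}\cdot\dot{\mathbf{x}}^\perp$ rather than its absolute value; since the pedal equation involves only squares, the orientation ambiguity disappears.
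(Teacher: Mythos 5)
Your proof is correct, and it follows essentially the same route the paper uses: the paper imports Theorem \ref{T1} from \cite{Blaschke6} without reproving it, but its proof of the generalization (Theorem \ref{T2}) is exactly this argument --- dotting the equation of motion with $\dot{\bf x}$ and with ${\bf x}^\perp$ to obtain the two first integrals, then eliminating $\abs{\dot{\bf x}}$ via the pedal identity $p\abs{\dot{\bf x}}={\bf x}^\perp\cdot\dot{\bf x}$. Your closing remark on the sign convention is also consistent with the paper's use of oriented $p$, since only $p^2$ enters the final equation.
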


In this paper we will generalize this result in multiple ways to include wider classes of force problems: 

\begin{theorem}\label{T2}
Let $f\equiv f(r,p),g\equiv g(r,p)$ be functions of pedal coordinates $r,p$, 
$$
r:=\abs{{\bf x}},\qquad p:=\frac{{\bf \dot x}\cdot {\bf x}^\perp}{\abs{{\bf \dot x}}}, \qquad {\bf x}\in\mathbb{R}^2,
$$ where the symbol ${\bf a}^\perp$ denotes a vector perpendicular to a two dimensional vector ${\bf a}$, i.e. $(a_1,a_2)^\perp:=(-a_2,a_1)$.

Then any solution of the following dynamical systems (on the left) can be written in pedal coordinates (on the right):
\begin{align}
\label{T2F1}\ddot {\bf x}&=f {\bf x}+g \dot {\bf x}^\perp, & \frac{\zav{\int g r{\rm d}r}^2}{p^2}&=2\int f r{\rm d}r, &\\
\label{T2F2}\ddot {\bf x}&=f  \frac{{\bf x}}{\abs{\dot {\bf x}}^\alpha}+g \frac{\dot {\bf x}^\perp}{\abs{\dot {\bf x}}^\beta}, &  \zav{\frac{(1+\beta)\int g p^\beta r{\rm d}r}{p^{1+\beta}}}^{2+\alpha}&=\zav{(2+\alpha)\int f r{\rm d}r}^{1+\beta}, & \nadsebou{\alpha\not= -2}{\beta\not=-1}\\
\label{T2F3}\ddot {\bf x}&=f  \frac{{\bf x}}{\abs{\dot {\bf x}}^{\beta-1}}+g \frac{\dot {\bf x}^\perp}{\abs{\dot {\bf x}}^\beta}, & \frac{\int g p^\beta r{\rm d}r}{p^{1+\beta}}&=\int f r{\rm d}r, & \beta\not=-1.\\
\label{T2F4}\ddot {\bf x}&=f  \frac{{\bf x}}{\abs{\dot {\bf x}}^{\alpha}}+g \abs{\dot {\bf x}} \dot {\bf x}^\perp, & \frac{e^{(\alpha+2)\int \frac{g}{p} r{\rm d}r}}{p^{\alpha+2}}&=(\alpha+2)\int f r{\rm d}r, & \alpha\not=-2.\\
\label{T2F5}\ddot {\bf x}&=f  \abs{\dot {\bf x}}^{2} {\bf x}+g \frac{\dot {\bf x}^\perp}{\abs{\dot {\bf x}}^{\beta}}, & \frac{(\beta+1)\int gp^\beta r{\rm d}r}{p^{\beta+1}}&=e^{(\beta+1)\int f r{\rm d}r}, & \beta\not=-1.\\
\label{T2F6}\ddot {\bf x}&=f  \abs{\dot {\bf x}}^{2} {\bf x}+g \abs{\dot {\bf x}} \dot {\bf x}^\perp, & \int \frac{g}{p} r{\rm d}r - \ln p&=\int f r{\rm d}r, & \\
\label{T2F7}\ddot {\bf x}&=f(r,p,\abs{\bf \dot x}) {\bf x}, & \frac{L^2}{p^2}&=2\int r f\zav{r,p,\frac{L}{p}}{\rm d}r,\qquad & L:={\bf \dot x}\cdot {\bf x}^\perp.
\end{align}
\end{theorem}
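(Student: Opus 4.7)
The plan is to handle all seven cases through a single common framework. Every equation of motion in the statement has the structural form
\[
\ddot{\mathbf{x}} = A\, \mathbf{x} + B\, \dot{\mathbf{x}}^\perp,
\]
with scalar coefficients $A, B$ depending only on $r$, $p$, and $|\dot{\mathbf{x}}|$. I would derive two master first-order identities from this common form and then specialize to each individual case by elementary separation of variables.

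Using the standard identities $(\mathbf{a}^\perp)^\perp = -\mathbf{a}$, $\mathbf{a}^\perp \cdot \mathbf{b}^\perp = \mathbf{a}\cdot \mathbf{b}$, $\mathbf{x}\cdot \mathbf{x}^\perp = 0 = \dot{\mathbf{x}}\cdot \dot{\mathbf{x}}^\perp$, and $\mathbf{x}\cdot \dot{\mathbf{x}} = r\dot r$, a direct computation gives
\[
\frac{d}{dt}\bigl(\dot{\mathbf{x}}\cdot \mathbf{x}^\perp\bigr) = B\, r\dot r, \qquad \frac{d}{dt}|\dot{\mathbf{x}}|^2 = 2A\, r\dot r.
\]
By the very definition of $p$, we have $L := \dot{\mathbf{x}} \cdot \mathbf{x}^\perp = p|\dot{\mathbf{x}}|$, so the substitution $|\dot{\mathbf{x}}| = L/p$ eliminates $|\dot{\mathbf{x}}|$ from $A$ and $B$ in favour of $L$ and $p$. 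The two identities then become the $1$-form relations
\[
dL = B\, r\, dr, \qquad d\bigl(|\dot{\mathbf{x}}|^2\bigr) = 2A\, r\, dr,
\]
valid along any trajectory.

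Each of the seven cases is then a routine separation-of-variables exercise. For T2F1 both coefficients are independent of $|\dot{\mathbf{x}}|$, so the master relations integrate directly to $L = \int g r\, dr$ and $|\dot{\mathbf{x}}|^2 = 2\int f r\, dr$, and $L^2 = p^2 |\dot{\mathbf{x}}|^2$ yields the claim. For T2F2--T2F5 a factor $|\dot{\mathbf{x}}|^\gamma = (L/p)^\gamma$ enters one or both coefficients; cross-multiplying each master identity by an appropriate power of $L$ or $|\dot{\mathbf{x}}|$ converts it into a separable ODE of the form $L^\beta\, dL = g\, p^\beta\, r\, dr$ or $|\dot{\mathbf{x}}|^{\alpha+1}\, d|\dot{\mathbf{x}}| = f r\, dr$, and integration followed by elimination of $L$ via $L = p|\dot{\mathbf{x}}|$ produces the stated pedal equations. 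The excluded exponents $\alpha = -2$ and $\beta = -1$ are precisely those at which an integration becomes logarithmic; T2F6 is the doubly logarithmic case, which is exactly why its pedal equation carries the $-\ln p$ term. Finally T2F7 is special: since $B = 0$, the angular momentum $L$ is an outright constant of motion, and substituting $|\dot{\mathbf{x}}| = L/p$ into $f$ turns the energy relation directly into the claimed implicit equation in $r$ and $p$.

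The main point requiring care---rather than a genuine obstacle---is pinning down the meaning of the antiderivatives $\int g r\, dr$, $\int f r\, dr$ etc.\ appearing in the pedal equations: these are to be read as line integrals along the trajectory (equivalently, antiderivatives in $r$ with $p$ viewed as a function of $r$ along the solution curve in pedal space), and their implicit integration constants play the role of the generalized angular momentum and energy first integrals familiar from Theorem~\ref{T1}. Once this convention is fixed, the derivation of each of the seven formulas reduces to a short bookkeeping computation.
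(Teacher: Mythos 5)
Your proposal is correct and follows essentially the same route as the paper: the two ``master identities'' you derive are exactly what the paper obtains by dotting the equation of motion with $\dot{\bf x}$ and with ${\bf x}^\perp$, and the subsequent elimination of $\abs{\dot{\bf x}}$ via $\dot{\bf x}\cdot{\bf x}^\perp=p\abs{\dot{\bf x}}$ together with case-by-case separation of variables (with the logarithmic degenerations at $\alpha=-2$, $\beta=-1$, and the reduction of (\ref{T2F7}) to (\ref{T2F1}) through the conserved $L$) is precisely the paper's argument, merely packaged more uniformly. Your closing remark on reading the antiderivatives as integrals along the trajectory also matches the paper's own caveat that these pedal expressions are technically differential equations.
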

\begin{remark}
The pedal expressions above corresponding to various dynamical systems are no longer algebraic in nature but contain indefinite integrals making them technically differential equations. But this notation was chosen in pursuit of full generality. Often one encounters problems with $f,g$ that are easily integrable. Particularly, if one demand that $f,g$ are functions of $r$ only a purely algebraic generalization of Theorem \ref{T1} can be generated, for example, as follows:
\end{remark}
\begin{corollary}\label{C1}
In the notation of Theorem \ref{T2}:
\begin{align*}
\ddot {\bf x}&=F'(r^2)  \frac{{\bf x}}{\abs{\dot {\bf x}}^\alpha}+2 G'(r^2) \dot {\bf x}^\perp, & \zav{\frac{G(r^2)+L}{p}}^{2+\alpha}&=\frac{2+\alpha}{2}F(r^2)+c, \\ 
\ddot {\bf x}&=\frac{F'(r^2)}{G'(\abs{{\bf \dot x}}^2)} {\bf x}, & G\zav{\frac{L^2}{p^2}}&=F(r^2)+c, 
\end{align*}
\end{corollary}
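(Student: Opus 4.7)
The plan is to derive each of the two identities as a direct specialization of Theorem \ref{T2}, evaluating the indefinite integrals that appear there in closed form. No new dynamical content is needed; the work is entirely a translation.

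\emph{First identity.} I would match the force law $\ddot{\bf x}=F'(r^2){\bf x}/|\dot{\bf x}|^\alpha+2G'(r^2)\dot{\bf x}^\perp$ against formula \eqref{T2F2} with the choices $\beta=0$, $f=F'(r^2)$, $g=2G'(r^2)$. Specializing \eqref{T2F2} to $\beta=0$ gives
$$
\zav{\frac{\int 2G'(r^2)\,r\,\dd r}{p}}^{2+\alpha}=(2+\alpha)\int F'(r^2)\,r\,\dd r.
$$
The substitution $s=r^2$ turns both antiderivatives into elementary ones: $\int 2G'(r^2)\,r\,\dd r=G(r^2)+L$ and $(2+\alpha)\int F'(r^2)\,r\,\dd r=\tfrac{2+\alpha}{2}F(r^2)+c$, where $L$ and $c$ are the two independent constants of integration. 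Absorbing these into the conserved quantities of the problem reproduces the stated pedal equation verbatim.

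\emph{Second identity.} I would apply \eqref{T2F7} to $f(r,p,|\dot{\bf x}|)=F'(r^2)/G'(|\dot{\bf x}|^2)$. From the definitions $p=(\dot{\bf x}\cdot{\bf x}^\perp)/|\dot{\bf x}|$ and $L=\dot{\bf x}\cdot{\bf x}^\perp$ one has $|\dot{\bf x}|=L/p$, so \eqref{T2F7} becomes the implicit integral identity
$$
\frac{L^2}{p^2}=2\int\frac{r\,F'(r^2)}{G'(L^2/p^2)}\,\dd r.
$$
Setting $u:=L^2/p^2$ and differentiating both sides with respect to $r$ along the orbit produces the separable ODE $G'(u)\,\dd u=2r F'(r^2)\,\dd r$. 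A single integration then gives $G(L^2/p^2)=F(r^2)+c$.

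\emph{Expected obstacles.} Both arguments are essentially bookkeeping exercises. The only step that is not mere substitution is the differentiation-and-reintegration in the second case, which trades the integral form demanded by \eqref{T2F7} for an algebraic one. Care is needed to ensure that the two free integration constants surviving the procedure are precisely the conserved quantities $L$ and $c$, but since Theorem~\ref{T2} leaves exactly these constants undetermined, no conflict can arise. I do not anticipate any genuine difficulty.
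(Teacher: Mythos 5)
Your proposal is correct and matches what the paper intends: the corollary is stated without an explicit proof precisely because it is the direct specialization of (\ref{T2F2}) with $\beta=0$ and of (\ref{T2F7}) that you carry out, with the substitution $s=r^2$ evaluating the integrals and $|\dot{\bf x}|=L/p$ handling the velocity dependence. Your differentiate-and-reintegrate step for the second identity is also consistent with how the paper treats these indefinite integrals elsewhere (``Differentiating out the integral we obtain a separable differential equation''), so there is nothing to add.
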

And so on.
\begin{remark}
Theorem \ref{T2} augments Theorem \ref{T1} primarily by the fact that right hand sides of equations of motion can explicitly depend on $\abs{{\bf \dot x }}$. But the added benefit is actually bigger than that. The reader might perhaps view as a disadvantage that the force must always be composed from a central component (i.e. that points in direction of ${\bf x}$) and a Lorentz-like component (that points in direction of ${\bf \dot x}^\perp$). 

But ${\bf x},{\bf \dot x}^\perp$ are just two vectors in a plane, typically independent and thus constitute a base. Other vectors that might be of interest like ${\bf x}^\perp, {\bf \dot x}$ can be hence written as their linear combination. In fact, it is simple to verify that: 
\begin{align*} 
{\bf x}^\perp&=\frac{p}{p_c} {\bf x}+\frac{r^2}{p_c \abs{\dot {\bf x}}}{\dot {\bf x}}^\perp, &
\dot {\bf x}&=\frac{\abs{\dot {\bf x}}}{p_c} {\bf x}+\frac{p}{p_c}{\dot {\bf x}}^\perp,\qquad p_c:=\frac{{\bf x}\cdot {\bf \dot x}}{\abs{{\bf \dot x}}},
\end{align*}
where $p_c$ is the (oriented) distance of the normal to the origin which is also called ``contrapedal coordinate''. The connection to the ordinary pedal coordinates is: 
$$
p_c^2=r^2-p^2.
$$

The fact that we can handle the quantity $\abs{{\bf \dot x }}$ makes this change of basis practical. Later we will give several examples which exploits this feature. 
\end{remark}

We will also show that using pedal coordinates, we can also easily solve some variational problems:
\begin{proposition}\label{P1}
Any extremal curve of the functional:
$$
\mathcal{L}\hzav{r}:=\inte{s_0}{s_1}f(r)\dd s,
$$
where
$$
\dd s:=\sqrt{{r'_\varphi}^2+r^2}\dd \varphi=\sqrt{1+{y'_x}^2}\dd x,
$$
is the arc-length measure, has pedal equation
$$
\frac{L}{p}=f(r).
$$
\end{proposition}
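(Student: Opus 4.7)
The plan is to exploit the rotational symmetry of the functional by passing to polar coordinates, where the integrand becomes autonomous in the angular variable, and then applying the Beltrami first integral. The resulting conserved quantity will turn out to be precisely $f(r)\,p$, which is the desired pedal equation once reinterpreted geometrically.

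Explicitly, using $\dd s = \sqrt{r^2 + {r'_\varphi}^2}\,\dd\varphi$ the functional reads
$$
\mathcal{L}[r] = \int f(r)\sqrt{r^2 + {r'_\varphi}^2}\,\dd\varphi,
$$
so the Lagrangian $F(r,r'_\varphi) = f(r)\sqrt{r^2 + {r'_\varphi}^2}$ carries no explicit $\varphi$-dependence. The Beltrami identity $F - r'_\varphi\,\partial F/\partial r'_\varphi = \mathrm{const}$ then applies, and a direct computation collapses it to
$$
\frac{f(r)\,r^2}{\sqrt{r^2 + {r'_\varphi}^2}} = L,
$$
where the constant on the right has been named $L$ to match the notation of the proposition.

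The last step is purely geometric: the factor $r^2/\sqrt{r^2 + {r'_\varphi}^2}$ is exactly the pedal distance $p$, as follows from $\sin\psi = p/r$ together with $\tan\psi = r/r'_\varphi$ (where $\psi$ is the angle between the radius vector and the tangent). Substituting gives $f(r)\,p = L$, equivalently $L/p = f(r)$. I do not anticipate a real obstacle here; the argument is the standard Beltrami trick combined with the elementary polar-to-pedal conversion. For interpretive color, the same identity is Bouguer's theorem from geometric optics: viewing $f(r)$ as a radially symmetric index of refraction makes $\mathcal{L}$ the optical path length, and the conservation $f(r)\,p = L$ is the Noether invariant attached to rotational symmetry, which explains why the pedal expression comes out so cleanly.
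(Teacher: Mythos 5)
Your argument is correct and is essentially the paper's own proof: the paper likewise passes to polar coordinates, invokes the Beltrami identity to get $f(r)\sqrt{r^2+{r'_\varphi}^2}-f(r){r'_\varphi}^2/\sqrt{r^2+{r'_\varphi}^2}=L$, and then converts to pedal form via $r'_\varphi=rp_c/p$, which is the same elementary identity $p=r^2/\sqrt{r^2+{r'_\varphi}^2}$ you use. The Bouguer/Noether remark is a pleasant gloss but does not change the substance.
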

\begin{remark}
There are, of course, many more functionals whose Euler-Lagrange equation can be solved in pedal coordinates. For instance, those of the form
$$
\mathcal{L}\hzav{r}:=\inte{\varphi_0}{\varphi_1}f(r,r'_\varphi, r''_\varphi,\dots, r^{(n)}_{\varphi})\dd \varphi.
$$
As can be seen from Proposition \ref{P2}, which we will state later.

We choose this particular form since it is more suitable for \textit{interpreting} pedal equation and it will enable us to even make connections between variational problems and force problems and vice versa. As we will see.
\end{remark}
%
%

To show that generalizations given in Theorem \ref{T2} and Proposition \ref{P1} are useful, we will give several examples of ``natural'' dynamical systems and variational problems that have some intrinsic interest other than their solution can be given in pedal coordinates. We will focus mainly on problems of celestial mechanics that will include trajectories of a Solar sail and Robert Zubrin's ``Dipole Drive'' \cite{Zubrin}. This will be done in Sections 5 and 6.

In Section 2 we will give a fairly detailed introduction of pedal coordinates since  -- although it is quite an old subject -- it has been largely forgotten. This introduction differs substantially from the one given in \cite{Blaschke6}, since there the quantities $p,p_c$ were always nonnegative (as proper distances should). But in this paper we are treating $p,p_c$ as \textit{oriented} distances that can be negative -- which simplifies some formulas, as we will see.

In Section 3 we will prove Theorem \ref{T2} and Proposition \ref{P1}.

In Section 4 we will give several examples of variational problems that are easily solvable in pedal coordinates, including ``central Brachistochrone'' and ``dark Catenary'' problems. 

Finally, in addition to pedal coordinates $(p,r)$ there are of course, number of other interesting coordinate systems -- e.g. \textit{intrinsic coordinates} where arc-length $s$ and curvature $\kappa$ are specified, and many more. We will attempt to systematize the analysis introducing so-called ``non-local variables''. This will enable us to ``solve'' number of dynamical system that are beyond the reach of even pedal coordinates, including ``aerobraking'' or trajectories of a solar sail pointing ``prograde'' -- i.e. in the direction of movement.

\section{Pedal coordinates}
Let us introduce for any point ${\bf x}\in\mathbb{R}^2$ on a plane curve $\gamma$ the following symbols: 
\begin{align*}
{\bf x}&:=(x,y), & {\bf x}^\perp&:=(-y,x), & {\bf x}\cdot {\bf y}^\perp&=-{\bf x}^\perp \cdot {\bf y}, & \zav{{\bf x}^\perp}^\perp&=-{\bf x}. \\
p&:=\frac{{\bf x}^\perp \cdot {\dot {\bf x}}}{\abs{\dot {\bf x}}}, &  p_c&:=\frac{{\bf x} \cdot {\dot {\bf x}}}{\abs{\dot {\bf x}}}, & \kappa&:=\frac{{\dot {\bf x}}^\perp \cdot {\ddot {\bf x}} }{\abs{\dot {\bf x}}^3}, & p^2+p_c^2&=r^2.\\
\end{align*}

This diagram roughly illustrates the definition of pedal coordinates $p,p_c,r$:
\begin{center}
{\small
\begin{tikzpicture}[domain=-4:5]
\draw [thick,color=blue] (-4,3) to [out=0,in=180] (0,2) to [out=0, in=225] (2,3);
\coordinate [label=right:{$\gamma$}] (M1) at (2,3);
\coordinate [label=above:{${\bf x}$}] (M2) at (0,2);
\draw [thin] (-4,2) to (2,2);
\draw [dashed] (-3,2) to (-3,0);
\draw [dashed] (0,2) to (-3,0);
\filldraw (-3,0) circle (0.05);
\filldraw (0,2) circle (0.05);
\filldraw (-3,2) circle (0.05);
\coordinate [label=below:{$0$}] (M3) at (-3,0);
\coordinate [label=above:{$P({\bf x})$}] (M4) at (-3,2);
\draw [densely dotted] (0,2) to (0,0);
\draw [densely dotted] (-3,0) to (0,0);
\filldraw (0,0) circle (0.05);
\coordinate [label=left:{$p$}] (M6) at (-3,1);
\coordinate [label=below right:{$r$}] (M7) at (-1.5,1);
\coordinate [label=below right:{$p_c$}] (M8) at (-1.5,0);
\end{tikzpicture}
}
\end{center}

But with $p,p_c$ we must remember that these are \textit{signed} distances, which can be negative.

The point $P({\bf x})$ indicates a point on a \textit{new}, so-called ``pedal curve'' which we denote $P(\gamma)$ and to transformation that bring a curve to its pedal curve we will refer as ``pedal transform'' $P$.

In fact, pedal coordinates was originally devised (as the name suggests) to make pedal transformation easy to do. In Cartesian coordinates it is generally required to solve a differential equation. But same thing can be by done algebraically in pedal coordinates:

Specifically (see \cite[p. 228]{williamson}), to any curve given by the equation
$$
f(p,r, p_c)=0,
$$ 
in pedal coordinates, the pedal curve satisfies the equation
$$
f\zav{r,\frac{r^2}{p},\frac{r}{p}p_c}=0.
$$
(It is always advantegous to show how various transforms affects also the quantity $p_c$, since in some settings it is the more natural variable to work with. This does not mean, of course, that the function $f$ above has three variables. It has only two since $p_c$ is not independent.)

Some of the simplest curve are given in pedal coordinates as follows:
\begin{align*}
p&=a, &\text{line distant $a$.} \\
r&=a, &\text{point distant $a$.} \\
p_c&=a, &\text{involute of a circle.} \\
2pR^2&=r^2+R^2-|a|^2, &\text{circle with radius $R$ and center at $a$.}
\end{align*}
Involute of a circle is not usually regarded as a ``simple curve'' but in pedal coordinates it has extremely simple equation so perhaps it is. (For a derivation see \cite{Blaschke6}.)

Here is an geometric proof for the equation of a circle:
\begin{center}
\definecolor{qqzzqq}{rgb}{0.,0.6,0.}
\definecolor{uuuuuu}{rgb}{0.26666666666666666,0.26666666666666666,0.26666666666666666}
\definecolor{xdxdff}{rgb}{0.49019607843137253,0.49019607843137253,1.}
\definecolor{qqqqcc}{rgb}{0.,0.,0.8}
\definecolor{qqqqff}{rgb}{0.,0.,1.}
\begin{tikzpicture}[line cap=round,line join=round,>=triangle 45,x=1.0cm,y=1.0cm,scale=0.7]
\clip(3.02,2.24) rectangle (14.24,10.76);
\draw [line width=1.6pt,color=qqqqcc] (9.48,6.5) circle (3.1294727990509825cm);
\draw [line width=1.6pt,domain=3.02:14.24] plot(\x,{(-4.488628851203698-2.1309949286509227*\x)/-2.2918247345868394});
\draw [line width=0.8pt] (5.8,5.36)-- (4.8068690528611295,6.42808422616822);
\draw [line width=0.8pt,color=qqzzqq] (5.8,5.36)-- (8.342136018487947,7.723740508418621);
\draw [line width=0.8pt,color=qqzzqq] (5.8,5.36)-- (9.48,6.5);
\draw [line width=0.8pt,color=qqzzqq] (9.48,6.5)-- (8.342136018487947,7.723740508418621);
\draw [line width=0.8pt] (8.342136018487947,7.723740508418621)-- (7.349005071349078,8.79182473458684);
\begin{normalsize}
\draw [fill=qqqqff] (9.48,6.5) circle (2.5pt);
\draw[color=qqqqff] (9.94,5.91) node {$a$};
\draw [fill=xdxdff] (7.349005071349078,8.79182473458684) circle (2.5pt);
\draw[color=uuuuuu] (6.84,9.37) node {${\bf x}$};
\draw [fill=qqqqff] (5.8,5.36) circle (2.5pt);
\draw[color=qqqqff] (6.06,4.53) node {$0$};
\draw [fill=uuuuuu] (4.8068690528611295,6.42808422616822) circle (1.5pt);
\draw[color=black] (5.54,6.13) node {$p$};
\draw [fill=uuuuuu] (8.342136018487947,7.723740508418621) circle (1.5pt);
\draw[color=qqzzqq] (7.51,6.54) node {$p_c$};
\draw[color=qqzzqq] (8.28,5.57) node {$|a|$};
\draw[color=qqzzqq] (9.56,7.21) node {$R-p$};
\draw[color=black] (8.9,7.89) node {$R$};
\end{normalsize}
\end{tikzpicture}
\end{center}
It is just the Pythargorean theorem on the green triangle (and applying identity $p_c^2=r^2-p^2$).

Letting $R\to 0$ shows that the equation $r=|a|$ indeed belong to a point -- and not to a circle as is usual in polar coordinates. 

($r=|a|$ is a circle in polar coordinates since the other coordinate -- the angle $\varphi$ -- is arbitrary. In pedal coordinates it is the coordinate $p$ which is arbitrary. But  \emph{not completely} arbitrary. It has to always be true that
$$
\abs{p}\leq r,
$$
in other words the distance to the tangent at a point is always smaller than the distance to that point. This is consistent with the picture that a curve consisted of single point has an arbitrary tangent line at this point.)

\begin{remark}
At this point it should be clear that pedal coordinates does not tell us everything about the curve and they actually describes many curves at once -- if you choose to differentiate between them.

The equation $p=a$ is valid for \emph{any} line distant $a$ and $r=a$ for any point distant $a$, etc.

Obviously, the pedal coordinates do not care about rotation around the pedal point and about the curve's parametrization, but it is actually not easy to tell in general the nature of ambiguity associated to a pedal equation -- in fact, it differs from equation to equation. (For more information see \cite{Blaschke6}.)

This is actually \emph{an advantage} of pedal coordinates over other systems if you are interested only in the general shape of the curve and do not want to be distracted by details. 
\end{remark}

\begin{example}
Pedal coordinates are extremely dependent on the choice of origin. For instance, take pedal equation of an ellipse:
\begin{equation}\label{pedellipsefocus}
\frac{b^2}{p^2}=\frac{2a}{r}-1,
\end{equation}
where $a,b$ are semi-axis. This holds only when the origin is at one focus.

If the origin is at center of the ellipse, the equation changes to:
\begin{equation}\label{pedellipsecenter}
\frac{a^2 b^2}{p^2}=- r^2+a^2+b^2.
\end{equation}

If the origin lies on the ellipse it is possible to derive such an expression:

For: 
$$
\abs{{\bf x}-a}+\abs{{\bf x}-b}=a+b,
$$
it holds:
\begin{equation}\label{pedellipseboundary}
\frac{ab(a+b)^2\zav{a(a+b)+R}\zav{b(a+b)+R}r^4}{p^2}=\zav{a^2(a+3b)-r^2(a-b)}\zav{r^2(a-b)+b^2(b+3a)}\zav{R+2ab}^2,
\end{equation}
where
$$
R=\sqrt{r^2(a-b)^2+4a^2b^2}.
$$

For general position of the origin, we believe, that pedal equation cannot be written in elementary functions only but some special functions (specifically elliptic functions) are required.

(Since these claims bare no significance for the goals of the paper we just state them as a curiosity without proof.)
\end{example}
\begin{example}
Even more dramatic demonstration of this property is a curve known (by some unfortunate mistranslation from latin) by the name ``Witch of Agnesi'' but the modern reader would recognize it just as a graph of derivative of arctan$(x)$ function:
$$
y=\frac{1}{1+x^2}.
$$
Here is its picture together with a geometrical construction:
\begin{center}
\definecolor{xdxdff}{rgb}{0.49019607843137253,0.49019607843137253,1.}
\definecolor{uuuuuu}{rgb}{0.26666666666666666,0.26666666666666666,0.26666666666666666}
\definecolor{qqqqcc}{rgb}{0.,0.,0.8}
\definecolor{qqqqff}{rgb}{0.,0.,1.}
\begin{tikzpicture}[line cap=round,line join=round,>=triangle 45,x=1.0cm,y=1.0cm,scale=0.7]
\clip(-1.4286443374600932,-0.5879277621259713) rectangle (19.063698260384477,9.631017218021487);
\draw [line width=1.6pt,dash pattern=on 4pt off 4pt,color=qqqqcc] (8.98,4.66) circle (1.82cm);
\draw [line width=0.8pt,dash pattern=on 4pt off 4pt] (8.98,-0.5879277621259713) -- (8.98,9.631017218021487);
\draw [line width=0.4pt,domain=-1.4286443374600932:19.063698260384477] plot(\x,{(-11.7936-0.*\x)/-1.82});
\draw [line width=0.4pt,domain=8.98:19.063698260384477] plot(\x,{(-7.14226350200407--1.3515494042157759*\x)/1.7586796295259148});
\draw [line width=0.4pt] (13.716485274978751,-0.5879277621259713) -- (13.716485274978751,9.631017218021487);
\draw [line width=0.4pt,domain=-1.4286443374600932:19.063698260384477] plot(\x,{(--18.765031197669227-0.*\x)/4.476872246524337});
\draw[line width=1.6pt,color=qqqqcc] (13.716485274978753,4.191549404215776)(5.339999963600002,4.6599999818) -- (5.339999927199999,4.6599999636) -- (5.3399998544,4.6599999272) -- (5.33999970879999,4.6599998544) -- (5.339999417599958,4.6599997088) -- (5.339998835199816,4.6599994176) -- (5.33999767039926,4.659998835200001) -- (5.3399953407970235,4.659997670400001) -- (5.339990681588079,4.6599953408) -- (5.339981363152296,4.6599906816) -- (5.339962726209164,4.6599813632) -- (5.339925452036629,4.6599627264000025) -- (5.339850902546467,4.65992545280002) -- (5.339701798985526,4.659850905600167) -- (5.339403573539414,4.659701811201334) -- (5.338807049336295,4.659403622410673) -- (5.337613707574297,4.658807244885382) -- (5.335225849729389,4.657614490283046) -- (5.330445429561896,4.655228984664365) -- (5.320865713586269,4.65045800211488) -- (5.301630314376035,4.640916266517601) -- (5.262851883031912,4.6218346312946625) -- (5.18403361668904,4.583686047280813) -- (5.103472464095049,4.545571025268226) -- (5.021091651408985,4.507506327806497) -- (4.936810077819049,4.469508695313047) -- (4.850542000318019,4.431594838710913) -- (4.76219669065933,4.393781432079438) -- (4.671678061582302,4.356085105321228) -- (4.625572160042566,4.337286029859293) -- (4.578884259041076,4.318522436848505) -- (4.531600652695354,4.299796389348833) -- (4.483707216769672,4.281109946292107) -- (4.435189392400174,4.262465162255643) -- (4.386032169056435,4.243864087236345) -- (4.3362200666973045,4.2253087664252975) -- (4.285737117076185,4.206801239982909) -- (4.234566844148043,4.188343542814585) -- (4.182692243527335,4.169937704346998) -- (4.130095760942797,4.151585748304945) -- (4.076759269631314,4.1332896924888445) -- (4.022664046609607,4.115051548552875) -- (3.967790747757879,4.096873321783798) -- (3.9121193816456876,4.078757010880467) -- (3.855629282025117,4.060704607734085) -- (3.7982990789114,4.042718097209175) -- (3.7401066681657538,4.024799456925369) -- (3.681029179488792,4.006950657039948) -- (3.6210429427271347,3.989173660031233) -- (3.5601234523881127,3.9714704204828046) -- (3.4982453302507466,3.9538428848686027) -- (3.4353822859524747,3.9362929913389064) -- (3.371507075422712,3.918822669507235) -- (3.3065914570245623,3.9014338402381883) -- (3.240606145256101,3.8841284154362454) -- (3.1735207618510186,3.8669082978355513) -- (3.105303784106681,3.849775380790711) -- (3.0359224902544364,3.832731548068617) -- (2.9653429016726687,3.815778673641323) -- (2.893529721727578,3.798918621480005) -- (2.8204462710098586,3.7821532453500195) -- (2.746054418716974,3.765484388607075) -- (2.670314509910674,3.7489138839945597) -- (2.5931852883575184,3.732443553442029) -- (2.5146238146362716,3.716075207864889) -- (2.434585379169892,3.699810646965276) -- (2.3530234098112928,3.6836516590341866) -- (2.2698893735806824,3.667600020754857) -- (2.1851326721181277,3.6516574970074074) -- (2.0987005303771475,3.635825840674801) -- (2.010537878044144,3.62010679245011) -- (1.920587223122699,3.604502080645129) -- (1.8749227332587366,3.5967431375651673) -- (1.8287885170721692,3.5890134210003475) -- (1.7821766454169812,3.581313143423012) -- (1.7350790108346732,3.5736425164962955) -- (1.6874873225049611,3.566001751068302) -- (1.63939310102419,3.558391057166313) -- (1.5907876730041823,3.550810643991009) -- (1.541662165484384,3.543260719910723) -- (1.4920075001499915,3.535741492455714) -- (1.4418143873476459,3.528253168312457) -- (1.3910733198909215,3.5207959533179682) -- (1.3397745666465124,3.5133700524541434) -- (1.28790816589225,3.5059756698421234) -- (1.2354639184374,3.4986130087366836) -- (1.1824313804951045,3.491282271520649) -- (1.1287998562966832,3.483983659699327) -- (1.0745583904366633,3.4767173738949726) -- (1.0196957599371386,3.469483613841272) -- (0.9642004660192791,3.4622825783778515) -- (0.9080607255693479,3.455114465444814) -- (0.8512644622859195,3.447979472077295) -- (0.7937992974943049,3.440877794400051) -- (0.735652540613478,3.4338096276220624) -- (0.6768111792601375,3.4267751660311756) -- (0.6172618689735334,3.419774602988754) -- (0.5569909225442334,3.4128081309243683) -- (0.4959842989286539,3.405875941330507) -- (0.4342275917306033,3.398978224757309) -- (0.3717060172300704,3.392115170807329) -- (0.3084044019380586,3.3852869681303277) -- (0.24430716965579732,3.3784938044180794) -- (0.17939832801483568,3.3717358663992196) -- (0.11366145447375474,3.365013339834107) -- (0.04707968174569244,3.3583264095097225) -- (-0.020364317370572597,3.351675259234587) -- (-0.08868834578459711,3.3450600718337067) -- (-0.15791069856352682,3.3384810291435527) -- (-0.2280501800685898,3.33193831200706) -- (-0.299126121812726,3.325432100268653) -- (-0.37115840107468834,3.318962572769312) -- (-0.4441674603075288,3.312529907341646) -- (-0.5181743273812545,3.3061342808050096) -- (-0.5932006367016245,3.2997758689606456) -- (-0.6692686512499226,3.293454846586846) -- (-0.7464012855906931,3.287171387434155) -- (-0.8246221298974182,3.280925664220585) -- (-0.9039554750492174,3.2747178486268753) -- (-0.9844263388545073,3.2685481112917714) -- (-1.0660604934612703,3.262416621807333) -- (-1.1488844940169187,3.2563235487142745) -- (-1.2329257086447798,3.250269059497328) -- (-1.3182123498083465,3.244253320580647) -- (-1.4047735071387843,3.238276497323223) -- (-1.4926391818059284,3.2323387540143482)(19.03901248497206,3.261455384408123) -- (18.99099114296469,3.2650351454888358) -- (18.943373290111968,3.2686280365100275) -- (18.89615364104563,3.2722340236540406) -- (18.849327002144012,3.2758530729799507) -- (18.802888269549136,3.279485150423888) -- (18.75683242723499,3.2831302217993534) -- (18.71115454512535,3.2867882527975505) -- (18.665849777259943,3.290459208987696) -- (18.57634061032286,3.297839758612752) -- (18.48826777427216,3.3052715928009997) -- (18.401595353036175,3.3127544317480844) -- (18.316288615616156,3.3202879937293464) -- (18.232313967584002,3.327871995110436) -- (18.149638904942822,3.335506150357988) -- (18.06823197021695,3.343190172050373) -- (17.988062710646687,3.350923770888515) -- (17.909101638371027,3.3587066557067913) -- (17.83132019248905,3.3665385334839852) -- (17.75469070289719,3.3744191093543265) -- (17.67918635580635,3.3823480866185873) -- (17.604781160848255,3.390325166755256) -- (17.531449919686708,3.398350049431772) -- (17.459168196053376,3.40642243251584) -- (17.387912287133762,3.414542012086799) -- (17.317659196232878,3.4227084824470673) -- (17.248386606653977,3.430921536133651) -- (17.18007285672832,3.439180863929722) -- (17.112696915937274,3.447486154876256) -- (17.04623836207108,3.4558370962837426) -- (16.98067735937242,3.46423337374396) -- (16.915994637615604,3.472674671141805) -- (16.852171472074815,3.481160670667202) -- (16.789189664337698,3.489691052827066) -- (16.727031523923106,3.4982654964573303) -- (16.665679850663803,3.506883678735037) -- (16.605117917817214,3.5155452751904965) -- (16.54532945586951,3.524249959719496) -- (16.486298636999955,3.5329974045955836) -- (16.428010060174213,3.5417872804824033) -- (16.37044873683743,3.5506192564460983) -- (16.313600077178783,3.559492999967761) -- (16.257449876941095,3.568408176955968) -- (16.201984304750773,3.577364451759341) -- (16.147189889943768,3.5863614871791984) -- (16.093053510865523,3.5953989444822394) -- (16.039562383623107,3.6044764834133063) -- (15.98670405126967,3.613593762208188) -- (15.934466373401587,3.6227504376064923) -- (15.882837516150353,3.631946164864564) -- (15.831805942551608,3.6411805977684706) -- (15.781360403274935,3.650453388647031) -- (15.73148992769885,3.659764188384912) -- (15.682183815315824,3.6691126464357646) -- (15.633431627453449,3.678498410835427) -- (15.58522317929815,3.6879211282151747) -- (15.537548532208582,3.6973804438150233) -- (15.490397986306528,3.706876001497088) -- (15.443762073333776,3.7164074437589876) -- (15.397631549763855,3.725974411747311) -- (15.351997390157893,3.7355765452711167) -- (15.306850780754942,3.7452134828155077) -- (15.21798597900827,3.764590317368343) -- (15.13097063827776,3.784101997325808) -- (15.045741264979885,3.8037455843006693) -- (14.962237209044742,3.8235181200409842) -- (14.88040050603188,3.843416626875598) -- (14.80017572964523,3.8634381081625744) -- (14.721509853856638,3.8835795487404767) -- (14.644352123915292,3.9038379153824407) -- (14.56865393558215,3.9242101572529653) -- (14.494368721983559,3.944693206367359) -- (14.421451847529257,3.965283978053769) -- (14.34986050838519,3.985979371417718) -- (14.279553639033399,4.006776269809088) -- (14.210491824488978,4.02767154129148) -- (14.142637217778397,4.048662039113866) -- (14.07595346231476,4.069744602184484) -- (14.010405618834419,4.090916055546883) -- (13.945960096584828,4.112173210858061) -- (13.882584588478124,4.133512866868619) -- (13.820248009945935,4.154931809904855) -- (13.75892044125134,4.176426814352731) -- (13.698573073031891,4.197994643143646) -- (13.639178154864531,4.219632048241913) -- (13.580708946658323,4.2413357711339135) -- (13.523139672695262,4.263102543318814) -- (13.466445478152323,4.284929086800786) -- (13.410602387949652,4.306812114582668) -- (13.355587267780935,4.328748331160965) -- (13.301377787192108,4.350734433022152) -- (13.247952384583687,4.372767109140156) -- (13.19529023402077,4.394843041474998) -- (13.143371213742611,4.4169589054724705) -- (13.092175876270979,4.439111370564803) -- (13.041685420023338,4.461297100672237) -- (12.991881662343028,4.483512754705422) -- (12.942747013864604,4.505754987068577) -- (12.894264454137758,4.528020448163323) -- (12.846417508438076,4.5503057848931165) -- (12.79919022569789,4.572607641168219) -- (12.752567157494289,4.594922658411106) -- (12.70653333803574,4.617247476062262) -- (12.616175881817385,4.661913063478106) -- (12.528007080597623,4.706577498536044) -- (12.441922748184972,4.75121387705051) -- (12.357824794481825,4.795795311736147) -- (12.275620774809928,4.8402949484037086) -- (12.19522347830903,4.884685982136013) -- (12.116550551500623,4.928941673434229) -- (12.039524153548506,4.973035364324745) -- (11.964070640132897,5.0169404944169385) -- (11.89012027319244,5.06063061690216) -- (11.817606954084813,5.1040794144843025) -- (11.746467977977792,5.147260715232359) -- (11.676643807512365,5.190148508345412) -- (11.608077863983082,5.232716959820566) -- (11.54071633446004,5.274940428014391) -- (11.474507993436669,5.316793479088483) -- (11.409404037728395,5.358250902329864) -- (11.345357933473167,5.399287725336962) -- (11.282325274196584,5.439879229062068) -- (11.220263649003979,5.48000096270115) -- (11.15913252005104,5.51962875842211) -- (11.098893108523901,5.558738745922586) -- (11.039508288431048,5.597307366808526) -- (10.980942487573255,5.635311388784879) -- (10.923161595114944,5.67272791964987) -- (10.866132875232061,5.709534421084401) -- (10.809824886357815,5.745708722228289) -- (10.754207405589353,5.781229033035172) -- (10.699251357856285,5.816073957398007) -- (10.644928749485686,5.850222506037297) -- (10.591212605829181,5.883654109144235) -- (10.538076912645367,5.916348628771199) -- (10.48549656095613,5.9482863709620935) -- (10.43344729511803,5.979448097615249) -- (10.38190566387086,6.009815038071744) -- (10.330848974144113,6.039368900422137) -- (10.280255247419264,6.068091882524845) -- (10.230103178461253,6.095966682729466) -- (10.180372096246735,6.1229765102986615) -- (10.131041926929502,6.1491050955222715) -- (10.082093158695248,6.174336699517561) -- (10.033506808368491,6.19865612370976) -- (9.985264389644277,6.222048718987094) -- (9.937347882826163,6.244500394524874) -- (9.889739705960013,6.265997626273278) -- (9.84242268726077,6.286527465103732) -- (9.795380038735827,6.30607754460899) -- (9.748595330915025,6.324636088552178) -- (9.702052468602863,6.342191917960376) -- (9.655735667573525,6.3587344578584) -- (9.60962943213422,6.374253743638784) -- (9.563718533486288,6.388740427064076) -- (9.517987988817696,6.402185781897873) -- (9.472423041063763,6.414581709161175) -- (9.427009139276329,6.425920742010917) -- (9.336577186410969,6.445401444380068) -- (9.246579131954464,6.460580958287011) -- (9.156904101587232,6.4714227149531744) -- (9.067442792287727,6.477900595637126) -- (8.97808693599347,6.479998994556902) -- (8.8887287739405,6.477712856485704) -- (8.79926053623856,6.47104768893039) -- (8.709573921332009,6.460019548863401) -- (8.61955956999451,6.444655004040106) -- (8.529106528408699,6.424991068994764) -- (8.48368025391044,6.413561232290237) -- (8.438101694688278,6.401075115869254) -- (8.392356282893886,6.387540240901465) -- (8.34642924290335,6.372964760289446) -- (8.300305571636462,6.357357453757707) -- (8.25397001826264,6.340727722564108) -- (8.20740706323323,6.323085583836894) -- (8.160600896577018,6.304441664540751) -- (8.113535395391922,6.2848071950755235) -- (8.066194100462328,6.264194002511437) -- (8.018560191926856,6.242614503464916) -- (7.9706164639169925,6.220081696619273) -- (7.9223452980815665,6.1966091548948015) -- (7.87372863590662,6.172211017272937) -- (7.8247479497338075,6.1469019802794795) -- (7.77538421237376,6.1206972891319475) -- (7.7256178652034,6.093612728556433) -- (7.675428784627912,6.065664613279468) -- (7.624796246779177,6.03686977820065) -- (7.573698890312653,6.007245568251919) -- (7.522114677153905,5.976809827949631) -- (7.470020851034075,5.945580890645677) -- (7.417393893640776,5.913577567484164) -- (7.3642094781964795,5.880819136070281) -- (7.310442420260908,5.8473253288582) -- (7.256066625536636,5.813116321264966) -- (7.201055034438283,5.7782127195176) -- (7.145379563164647,5.742635548240662) -- (7.089011040990232,5.706406237791815) -- (7.031919143467227,5.669546611352972) -- (6.974072321200843,5.632078871784828) -- (6.915437723830144,5.594025588252693) -- (6.85598111881194,5.555409682631661) -- (6.79566680456756,5.5162544156993345) -- (6.734457517510024,5.476583373124406) -- (6.672314332422502,5.4364204512595276) -- (6.60919655560677,5.395789842747039) -- (6.545061610162729,5.354716021946234) -- (6.479864912695446,5.313223730190916) -- (6.41355974067425,5.271337960886139) -- (6.346097089588182,5.229083944453122) -- (6.277425518952053,5.1864871331313855) -- (6.207490986116817,5.143573185647274) -- (6.136236666724937,5.100367951758094) -- (6.063602760524532,5.0568974566811935) -- (5.989526281113337,5.01318788541734) -- (5.9139408280225645,4.969265566977856) -- (5.836776339369226,4.925156958525008) -- (5.757958823100009,4.880888629435204) -- (5.677410064617444,4.836487245294585) -- (5.59504730831515,4.791979551836683) -- (5.510782910249305,4.74739235883179) -- (5.424523958831995,4.702752523937746) -- (5.38061594908684,4.68042126791374) -- (5.35846173537222,4.669254276426959) -- (5.347333859395943,4.663670623749085) -- (5.341757138226985,4.660878781169422) -- (5.340361621926603,4.660180819944783) -- (5.340012659238906,4.660006329630459) -- (5.340001753615834,4.660000876808127) -- (5.3400003904106565,4.660000195205336) -- (5.340000049609279,4.660000024804638) -- (5.340000007009105,4.660000003504551)(13.716485274978753,4.191549404215776);
\draw [line width=0.8pt,dash pattern=on 4pt off 4pt,domain=-1.4286443374600932:19.063698260384477] plot(\x,{(-5.1688-0.*\x)/-1.82});
\draw [fill=qqqqff] (8.98,4.66) circle (2.5pt);
\draw [fill=qqqqff] (8.98,2.84) circle (2.5pt);
\draw[color=qqqqff] (8.49988657575601,2.1619304341304053) node {$0$};
\draw [fill=uuuuuu] (8.98,6.48) circle (1.5pt);
\draw[color=uuuuuu] (9.31667613900048,6.717801109116218) node {$1$};
\draw [fill=xdxdff] (10.738679629525915,4.191549404215776) circle (2.5pt);
\draw [fill=uuuuuu] (13.716485274978753,4.191549404215776) circle (1.5pt);
\end{tikzpicture}
\end{center}
The Witch of Agnesi also enjoys quite a compact formula in polar coordinates:
$$
r(r^2+1)\sin\varphi-r^3\sin^3\varphi=0,
$$
which are similarly dependent on the position of the origin.

But in pedal coordinates, all hell brake loose:
$$
1/36\,{\frac {{C}^{2}}{{p}^{2}}}+1/3\,{\frac {{p}^{2}-2\,{r}^{2}+2}{{p
}^{2}}}+4\,{\frac {{r}^{4}+4\,{p}^{2}-2\,{r}^{2}+1}{{p}^{2}{C}^{2}}}+
192\,{\frac {{r}^{2} \left( {r}^{2}+1 \right) ^{2} \left( p-r \right) 
 \left( p+r \right) }{{p}^{2}{C}^{3}B}}-8\,{\frac { \left( {r}^{2}+1
 \right) ^{2} \left( p-r \right)  \left( p+r \right) }{{p}^{2}CB}}
$$
$$
-
1152\,{\frac { \left( {r}^{2}+1 \right) ^{3} \left( r-1 \right) 
 \left( r+1 \right)  \left( p-r \right)  \left( p+r \right) }{{p}^{2}{
C}^{5}B}}-576\,{\frac {{r}^{2} \left( {r}^{2}+1 \right) ^{4} \left( p-
r \right)  \left( p+r \right) }{{p}^{2}{C}^{4}{B}^{2}}}-82944\,{\frac 
{{r}^{2} \left( {r}^{2}+1 \right) ^{6} \left( p-r \right)  \left( p+r
 \right) }{{p}^{2}{C}^{8}{B}^{2}}}
$$
$$
+13824\,{\frac {{r}^{2} \left( {r}^{
2}+1 \right) ^{5} \left( p-r \right)  \left( p+r \right) }{{p}^{2}{C}^
{6}{B}^{2}}}=0,
$$
where
$$
C:=\sqrt[3]{12 B-108},\qquad B:=\sqrt{-12r^6-36r^4-36r^2+69}.
$$ 
(Again, since this claim is not instrumental to our goal we leave the proof for the interested reader.)
\end{example}
\begin{remark}
From these examples we can see that pedal coordinates are not natural for \textit{all} curves. Similarly horrible pedal equations as for the Witch can be, in fact, expected for \textit{most} algebraic curves, since Cartesian coordinates are indifferent to the position of the origin, which is crucial for the pedal coordinates. 

On the other hand, pedal coordinates are more agreeable with polar coordinates which shares this property. But, obviously, it is required that the polar equation in question does not explicitly mentioned the angle $\varphi$ -- since pedal equation are rotationally invariant.

But instead of $\varphi$ one can use derivatives of $r$:
\end{remark}

\begin{proposition}\label{P2}
A curve $\gamma$ which a solution of a $n$-th order autonomous differential equation ($n\geq 1$)
$$
f\zav{r,r_\varphi^\prime,r_\varphi^{\prime\prime},\dots,r_\varphi^{(n)}}=0,
$$
is the pedal of a curve given in pedal coordinates by
$$
f(p,p_c, p_c p_c^\prime,\dots, (p_c\partial_p)^n p)=0.
$$
In other words
\begin{align*}
r&=P(p), \\
r_\varphi^\prime&=P(p_c),\\
r_\varphi^{\prime\prime}&=P(p_c p_c^\prime), \\
&\vdots \\
r_\varphi^{(n)}&=P\zav{(p_c\partial_p)^{n}p}. 
\end{align*}
\end{proposition}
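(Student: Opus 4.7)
The plan is to reduce Proposition~\ref{P2} to two elementary ingredients: the standard polar identity $r'_\varphi = r\,p_c/p$ valid for every plane curve, and the algebraic rule for the pedal transform quoted earlier in the section, which says that if $\gamma = P(\tilde\gamma)$ then the pedal coordinates of $\tilde\gamma$ and $\gamma$ are related by $\tilde p = r_\gamma$, $\tilde r = r_\gamma^2/p_\gamma$, $\tilde p_c = r_\gamma\,p_{c,\gamma}/p_\gamma$. The very first identification $r = P(p)$ is then essentially the definition of the pedal curve: its polar radius equals the distance from the origin to the tangent of $\tilde\gamma$. For the second, $r'_\varphi = P(p_c)$, the two facts combine immediately: $r'_\gamma = r_\gamma \cdot p_{c,\gamma}/p_\gamma = \tilde p_c$.

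The technical heart is the higher derivatives, and the key move I would make is to show that, when acting on any function of $p$ along $\tilde\gamma$, the operator $d/d\varphi_\gamma$ coincides with $p_c\,\partial_p$. Using the arc length $\tilde s$ of $\tilde\gamma$, one has the short computation $dp/d\tilde s = \kappa\,p_c$, and it remains to identify $d\varphi_\gamma/d\tilde s$ with $\kappa_{\tilde\gamma}$. This is the geometric content of the pedal transform: the point ${\bf x}_\gamma$ lies by construction on the line from the origin perpendicular to the tangent of $\tilde\gamma$, so its polar angle $\varphi_\gamma$ coincides with the tangent angle of $\tilde\gamma$ up to an additive constant $\pm\pi/2$, and its rate of change with respect to arc length is therefore exactly the curvature. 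Dividing $dp/d\tilde s$ by $d\varphi_\gamma/d\tilde s$ cancels $\kappa$ and leaves $dp/d\varphi_\gamma = p_c$, proving $d/d\varphi_\gamma = p_c\,\partial_p$ on functions of $p$.

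From here everything is bookkeeping. A direct induction on $n$ starting from $r_\gamma = p$ gives $r_\gamma^{(n)} = (p_c\,\partial_p)^n p$ along $\tilde\gamma$, so substituting these identifications into the hypothesis $f(r,r'_\varphi,\dots,r^{(n)}_\varphi) = 0$ yields the claimed pedal equation $f(p,p_c,p_c p_c',\dots,(p_c\,\partial_p)^n p) = 0$ for $\tilde\gamma = P^{-1}(\gamma)$. The main obstacle is securing the identity $d\varphi_\gamma/d\tilde s = \kappa_{\tilde\gamma}$ rigorously, including the orientation and sign conventions that come from working with signed $p$ and $p_c$ as emphasized in Section~2; once this geometric identity is in hand, both the inductive step and the final substitution are purely formal.
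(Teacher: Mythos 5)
Your argument is correct, but it reaches the key identity by a genuinely different route than the paper. The paper works entirely on the curve $\gamma$ itself: it computes $\dot r$ and $\dot\varphi$ from the Cartesian definitions, obtains the universal identity $r'_\varphi=\frac{r p_c}{p}$, recognizes the right-hand side as the image of the symbol $p_c$ under the algebraic pedal substitution $p\to r$, $r\to r^2/p$, $p_c\to \frac{r}{p}p_c$ quoted from Williamson, and then iterates formally via $P(p_c\partial_p f)=P(p_c)\,\partial_{P(p)}P(f)=r'_\varphi\,\partial_r P(f)=\partial_\varphi P(f)$. You instead work upstream on the pre-pedal curve $\tilde\gamma$ and prove the conjugation $\frac{d}{d\varphi_\gamma}=p_c\,\partial_p$ geometrically, from $\frac{dp}{d\tilde s}=\kappa p_c$ together with the observation that the polar angle of the pedal point is the tangent angle of $\tilde\gamma$ up to a constant, so that $\frac{d\varphi_\gamma}{d\tilde s}=\kappa_{\tilde\gamma}$ and the curvature cancels. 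What your route buys is self-containedness: you only need $r_\gamma=p$ and never invoke the (unproved here) substitution rule for how $r$ and $p_c$ transform under $P$; what it costs is exactly the obstacle you flag, namely pinning down orientation and the sign of $p$ in $r_\gamma=|p|$ and $\varphi_\gamma=\theta\pm\pi/2$, plus the tacit assumption $\kappa\neq 0$ when dividing the two arc-length derivatives (the paper's division by $\dot\varphi$ carries an analogous caveat). Both proofs then finish with the same formal induction and substitution, so I consider your proposal a valid alternative proof.
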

\begin{remark}
This is a restatement of a similar Proposition in \cite{Blaschke6} -- but with oriented distances $p,p_c$. 
\end{remark}
\begin{proof}
Denoting
$$
r:=\sqrt{x^2+y^2},\qquad \varphi:=\arctan\frac{y}{x},\qquad {\bf x}:=(x,y),\qquad {\bf x}^\perp:=(-y,x),
$$
we have
$$
\dot r=\frac{x\dot x+y\dot y}{r}=\frac{{\bf x}\cdot {\bf \dot x}}{r},\qquad
\dot \varphi=\frac{\dot y x-\dot x y}{r^2}=\frac{{\bf \dot x}\cdot {\bf x}^\perp}{r^2}, 
$$
thus
$$
r'_\varphi:=\frac{\partial r}{\partial \varphi}=\frac{\dot r }{\dot \varphi}=\frac{{\bf x}\cdot {\bf \dot x} r}{{\bf \dot x}\cdot {\bf x}^\perp}=\frac{p_c r}{p}=P(p_c).
$$
From this we can see that for any function $f$ of pedal coordinates it holds
$$
P\zav{p_c \partial_p f}= P(p_c)\partial_{P(p)} P(f)=r'_\varphi \partial_r P(f)=\partial_\varphi P(f).
$$
Hence
$$
P\zav{(p_c \partial_p)^n f}=\partial_\varphi^n P(f).
$$
\end{proof}
\begin{example}\label{Logspiralex}
Logarithmic spiral is given in polar coordinates by
$$
r=r_0e^{\alpha \varphi}.
$$ 
Differentiating with respect to $\varphi$ we get an autonomous differential equation:
$$
r'_\varphi=\alpha r.
$$
Using Proposition \ref{P2} we get an equation which is invariant under Pedal transform:
$$
p_c=\alpha p\qquad \stackrel{P}{\longrightarrow} \qquad p_c=\alpha p.
$$
We can also derive relation:
$$
r^2=(1+\alpha^2) p^2.
$$
\end{example}
But even polar coordinates are \textit{not} most natural starting place from which pedal equations can be generated. That title, we must conclude, belongs to force problems. 

\subsection{Force problems}

Theorem \ref{T1} can be used, with great advantage, to interpret a large family of pedal equations. For instance, there is the Kepler problem:
\begin{equation}\label{Kproblem}
{\bf \ddot x}=-\frac{M}{r^3}{\bf x}, 
\end{equation}
i.e. evolution of a test particle under the influence of a single gravitationally attracting body. It is well known that solutions are focal conic sections. Using Theorem \ref{T1}, these curves are thus given in pedal coordinates as
$$
\frac{L^2}{p^2}=\frac{2M}{r^2}+c,
$$
which explains the equation (\ref{pedellipsefocus}).

Similarly, solutions of a Hook's law 
\begin{equation}\label{Hlaw}
{\bf \ddot x}=-\omega {\bf x},
\end{equation}
i.e. \textit{central} conic sections are hence given in pedal coordinates as
$$
\frac{L^2}{p^2}= -\omega r^2+c,
$$
which explains (\ref{pedellipsecenter}).

It is also possible to write down a dynamical system whose solution would contain ellipses which passes through origin -- just reinterpret equation (\ref{pedellipseboundary}) using Theorem \ref{T1}. But it would be a horrible expression involving both Lorentz-like term and a central term with nested square roots.

As another example, take a problem of determining the orbit of a charged particle in a uniform magnetic field:
$$
{\bf \ddot x}= 2a {\bf \dot x}^\perp,
$$
(i.e. experiencing only Lorentz force). Solution are, of course, circles.  In pedal coordinates we have 
$$
\frac{\zav{L-a r^2}^2}{p^2}=c\qquad \Rightarrow\qquad \sqrt{c}p= a r^2-L.
$$
This is thus another derivation for pedal equation of circles.

More generally, we can consider equation for ``epicycles'':
$$
{\bf \ddot x}= -a {\bf x}+ 2b {\bf \dot x}^\perp,\qquad a>0,b\in\R.
$$
This is easier to solve in complex numbers. Introducing $z:=x+\ii y$, where ${\bf x}=(x,y)$ and $\ii \dot z ={\bf \dot x}^\perp$ we obtain: 
$$
\ddot z=-a z+2b\ii \dot z,
$$
with solution:
$$
z=c_1e^{\ii \omega_+ t}+c_2 e^{\ii \omega_- t},\qquad \omega_{\pm}:=b\pm\sqrt{b^2+a}.
$$
From this form we can clearly see that the resulting curves are indeed ``epicycles'', i.e. we are tracing a point on rotating circle (a.k.a ``epicycle'') with angular velocity $\omega_-$ whose center is also rotating on another circle (a.k.a ``deferent'') with angular velocity $\omega_+$.
  
Using Theorem \ref{T1} on the original equation we thus obtain pedal equation for epicycloids:
\begin{equation}\label{epi}
\frac{\zav{L-r^2b}^2}{p^2}=-ar^2+c.
\end{equation}

And so on.
\subsection{Transforms} Even if a pedal equation for a curve is not know or cannot be interpreted via Theorem~\ref{T1}, it is often possible to connect them with other curves using one of the many \textit{transformations of curves} -- some of which are particularly easy to handle in pedal coordinates.

We will list few of them:

\subsubsection{Scaling} From the definition of quantities $p,p_c,r$ it is clear that a classical radial scaling $S_\alpha$, i.e. transform that maps a given point ${\bf x}$ to the point
$$
{\bf \tilde x}=\frac{{\bf x}}{\alpha},
$$
is given in pedal coordinates as follows:
$$
f(p,r,p_c)=0 \qquad \stackrel{S_\alpha}{\longrightarrow}\qquad f(\alpha p, \alpha r,\alpha p_c)=0.
$$
\subsubsection{Pedal} As was mentioned, pedal transform $P$ maps any point on a given curve ${\bf x}$ to the orthogonal projection ${\bf \tilde x}$ of origin to the tangent at ${\bf x}$. 

Algebraically, the new point is given by 
$$
{\bf \tilde x}={\bf x}-p_c \frac{\bf \dot x}{\abs{\bf \dot x}}. 
$$

This transformation has very nice properties. It maps a focal ellipse or hyperbola (that is with the origin at focus) to its circumcircle. Pedal of a parabola is a line  -- not the directrix, though; this line is parallel to the directrix and passes through the vertex. Central rectangular hyperbola is mapped to Lemniscate of Bernoulli and so on. 

Using the already mentioned formula:
$$
f(p,r, p_c)=0 \qquad \stackrel{P}{\longrightarrow}\qquad f\zav{r,\frac{r^2}{p},\frac{r}{p}p_c}=0.
$$
we can check these claims with ease. For instance, the following line of computation: 
$$
\frac{L^2}{p^2}=\frac{M}{r}+c \qquad \stackrel{P}{\longrightarrow} \qquad \frac{L^2}{r^2}=\frac{M p}{r^2}+c \qquad \Rightarrow \qquad Mp= -c r^2+L^2.
$$
proves that the pedal of a (focal) conic is, indeed, a circle. Or, inversely, taking for granted that pedal of a central rectangular hyperbola ($\frac{a^2 b^2}{p^2}=r^2$) is Lemniscate of Bernoulli, we can in no time produce pedal equation for it!
$$
\frac{a^2 b^2}{p^2}=r^2 \qquad \stackrel{P}{\longrightarrow} \qquad \frac{a^2 b^2}{r^2}=\frac{r^4}{p^2} \qquad \Rightarrow \qquad r^3=ab p.
$$

\subsubsection{Circle inverse} The circle inverse $I$, i.e. transform
$$
{\bf \tilde x}=\frac{{\bf x}}{r^2},
$$
is given in pedal coordinates as follows:
$$
f(p,r,p_c)=0, \qquad \stackrel{I}{\longrightarrow} \qquad f\zav{\frac{p}{r^2},\frac{1}{r},-\frac{p_c}{r^2}}=0.
$$
This can be seen performing following computations:
\begin{align*}
\tilde r^2&:=\abs{{\bf \tilde x}}^2=\abs{\frac{{\bf x}}{r^2}}^2=\frac{1}{r^2}.\\
{\bf \dot{\tilde x}}&={\bf \dot  x}\frac{1}{r^2}-2{\bf x} \frac{p_c\abs{{\bf \dot x}}}{r^4}\\
\abs{{\bf \dot{\tilde x}}}&=\frac{\abs{{\bf \dot x}}}{r^2},\\
\tilde p&:=\frac{{\bf \tilde x}^\perp \cdot {\bf \dot {\tilde x} } }{{\bf \dot{\tilde x}}}= \frac{{\bf x}^\perp}{r^2}\cdot \zav{{\bf \dot x}\frac{1}{r^2}-2{\bf x} \frac{p_c\abs{{\bf \dot x}}}{r^4}} \frac{r^2}{{\abs{\bf \dot x}}}=\frac{{\bf x}^\perp \cdot {\bf \dot x}}{\abs{{\bf \dot x}} r^2}=\frac{p}{r^2}\\
\tilde p_c&:=\frac{{\bf \tilde x} \cdot {\bf \dot {\tilde x} } }{{\bf \dot{\tilde x}}}= \frac{{\bf x}}{r^2}\cdot \zav{{\bf \dot  x}\frac{1}{r^2}-2{\bf x} \frac{p_c\abs{{\bf \dot x}}}{r^4}} \frac{r^2}{{\abs{\bf \dot x}}}=-\frac{p_c}{r^2}.
\end{align*}
It is an easy exercise to verify that circle inverse possess its very known properties, e.g. that maps circles and lines to circles and lines.

\subsubsection{Dual curve}
Dual curve $D$, i.e. a curve in the dual projective space consisting of the set of lines tangent to the original curve. It is given by 
$$
{\bf \tilde x }:=\frac{{\bf \dot x}}{{\bf x}^\perp \cdot {\bf \dot x}}.
$$
Its action in pedal coordinates can be establish performing similar (though more difficult) computation as in the case of circle inverse. But a shortcut can be made since it is known that Dual can be obtained by composing Pedal transform with Circle inverse as follows: $D=IP$.

Thus
$$
f(p,r,p_c)=0, \qquad \stackrel{D:=IP}{\longrightarrow} \qquad f\zav{\frac{1}{r},\frac{1}{p},-\frac{p_c}{r p}}=0.
$$

We can easily verify, that Dual is its own inverse, i.e. $D^{-1}=D$ as is the case also with $I^{-1}=I$.

For central force problems where angular momentum is conserved (that is the quantity ${\bf x}^\perp {\bf \dot x}$ is constant) we can use the Dual transform to obtain so-called ``curve of velocities'' of the orbit -- i.e. curve witch is traced by (suitably scaled) vector ${\dot x}$ positioned at the origin. Newton famously showed in his \textit{Principia} \cite{Newton} by geometrical arguments that the curve of velocities for the Kepler problem is a circle.

In pedal coordinates obtaining the same results amounts to show that the Dual of focal conic is a circle:
$$
\frac{L^2}{p^2}=\frac{M}{r}+c\qquad \stackrel{D}{\longrightarrow} \qquad L^2 r^2 =M p+c.
$$

Dual transform can be also used to define new transforms -- their dual counterparts. For every transform $T$ we can define its \textit{dual transform} $T^*$ as follows:
$$
T^*:=D T D.
$$

Dual transform to Pedal is its inverse: $P^*=P^{-1}$ and dual Inverse is given by
$$
f(p,r,p_c)=0, \qquad \stackrel{I^*:=DID}{\longrightarrow} \qquad f\zav{\frac{1}{p},\frac{r}{p^2},-\frac{p_c}{p^2}}=0.
$$

Hence dual Inverse of a line is again a line. Focal conics are also preserved under this transform.
But dual Inverse of a circle is quite complicated. Even for the circle passing through origin we get a curve known as ``Tschirnhausen cubic''.

\subsubsection{Parallel curves}
Another important transform for our purposes will be \textit{parallel curves} $E_\alpha$, i.e. curves that are of equal distance $\alpha$ to the given curve. 
$$
{\bf \tilde x}={\bf x}+\alpha \frac{{\bf \dot x}^\perp}{\abs{\bf \dot x}}.
$$
Deriving its effect is easy in pedal coordinates, since it is known that parallel curves shares normals and thus the variable $p_c$ is preserved. The distance to tangent $p$ is hence just shifted, i.e. $p\to p-\alpha$ and  from the identity $r^2=p^2+p_c^2$ we can work out the remaining variable. Altogether we get
$$
f\zav{p,r^2,p_c}=0 \qquad \stackrel{E_\alpha}{\longrightarrow}\qquad f\zav{p-\alpha,r^2-2p\alpha+\alpha^2,p_c}=0.
$$

The dual version of parallel transform is also very interesting. It can be readily computed in pedal coordinates to have the following action:
$$
f\zav{\frac{1}{p^2},r,\frac{p}{p_c}}=0 \qquad\stackrel{E^*_\alpha:=D E_\alpha D}{\longrightarrow}\qquad  f\zav{\frac{1}{p^2}-\frac{2\alpha}{r}+\alpha^2,\frac{r}{1-\alpha r},\frac{p}{p_c}(1-\alpha r)}=0.
$$
In Cartesian coordinates the same think is obtained when for any given point ${\bf x}$ on a curve the transformed point ${\bf \tilde x}$ is given by
$$
{\bf \tilde x}:=\frac{{\bf x}}{1+\alpha r}.
$$

It is easy to see that the dual parallel of a line is a focal conic:
$$
\frac{1}{p^2}=c \qquad\stackrel{E^*_\alpha}{\longrightarrow}\qquad  \frac{1}{p^2}=\frac{2\alpha}{r}+c-\alpha^2,
$$
and that focal conics are preserved by this operation since it holds:
$$
E^*_\alpha E^*_\beta=E^*_{\alpha+\beta},
$$
i.e. $E^*_\alpha$ is a group of transformations (as is regular $E_\alpha$).
 
\subsubsection{Curves Harmonics}
The $k$-th harmonic of a curve is done simply by multiplying the angle of every point on a curve by a constant $k$. For example, the figure below shows the second harmonic ($k=2$) and the third subharmonic ($k=\frac{1}{3}$) of an ellipse, where the center of polar coordinates are in one focus.
\begin{center}
\includegraphics{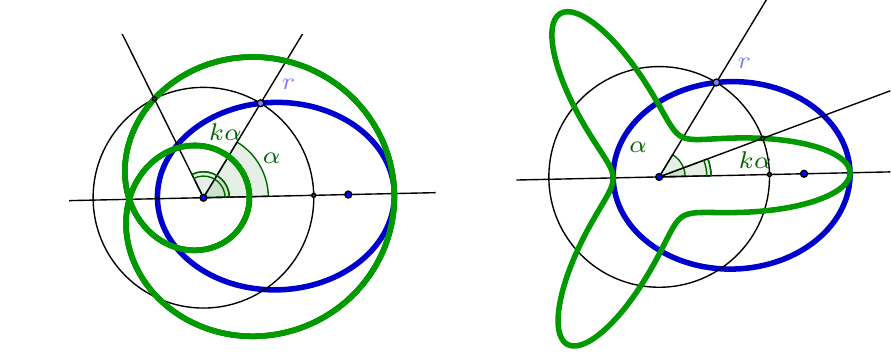}
\\
The second harmonic and the third subharmonic of a focal ellipse.
\end{center}

In pedal coordinates this amounts to the following:
$$
f\zav{\frac{1}{p^2},r}=0\qquad \stackrel{H_k}{\longrightarrow} \qquad f\zav{\frac{k^2}{p^2}-\frac{k^2-1}{r^2},r}=0.
$$

Many interesting curves can be obtained by making harmonics of known curves. E.g.
harmonics of a line are so-called Cote's spirals (or epispirals) and harmonics of a circle are called ``roses''.

But most famous application of harmonic curves gave Newton himself in \textit{Principia}. He proved his ``Theorem of revolving orbits'' (see \cite{Newton}), that if a curve is given as a solution to the central force problem $F(r)$ adding additional force of the form
$
\frac{L^2}{m r^3}(1-k^2),
$
where $L$ is the particle's angular momentum and $m$ its mass, is equivalent to a making the curve's $k$-th harmonic.

(This particular fact is easy to see in pedal coordinates and it follows directly by applying $H_k$ transform on central force case of Theorem \ref{T1}.)

Theorem of revolving orbits remained largely forgotten until 1997, when it was studied in works \cite{Bell1},\cite{Bell2}. A generalization was discovered by Mahomed and Vawda in 2000 \cite{Mahomed}. They assumed that the radial distance $r$, and the angle $\varphi$ changes according to rule:
\begin{equation}\label{Mahomedtr}
r\to \frac{ar}{1-br},\qquad \varphi \to \frac{1}{k}\varphi,
\end{equation}
where $a,b$ are given constant.

They proved that such a transform of the solution is equivalent to changing the force as follows:
\begin{equation}\label{Mohamedeq}
F(r)\to \frac{a^3}{(1-br)^2}F\zav{\frac{ar}{1-br}}+\frac{L^2}{m r^3}(1-k^2)-\frac{bL^2}{mr^2},
\end{equation}
where again $m$ is the particle mass and $L$ its angular momentum.

As we can see, this transform \ref{Mahomedtr} is nothing more than a combination of Harmonic curve $H_k$ an Dual parallel transform $E_\alpha$ and scaling.

Further generalization to cover also non-local transformation was done in \cite{Blaschke6}.
\subsubsection{Dual Harmonics}
Transform that will be useful in what follows is also dual version of $H_k$, which can easily be shown to be given by
$$
f\zav{p,p_c}=0\qquad \stackrel{H_k^\star:=D H_k D}{\longrightarrow} \qquad f\zav{p,kp_c}=0.
$$
As we can see, Dual Harmonics preserves lines and maps Involute of a circle to a different Involute o a circle.

Interestingly, for a special value of $k$, it can be shown, that dual harmonics $H^\star_k$ transform a circle that does not contain the origin to a curve with \textit{linear} pedal equation. More precisely it holds: For $a>R$:
$$
2Rp=r^2+R^2-a^2, \qquad \stackrel{H^\star_k}{\longrightarrow} \qquad Rp=ar+R^2-a^2,\qquad k=\frac{a}{\sqrt{a^2-R^2}}.
$$  
\subsubsection{Power transform}
Dealing with planar curves only allow us to describe a point on a curve ${\bf x}$ not as a vector, but as a complex number:
$$
{\bf x}=(x,y),\qquad z:= x+\ii y.
$$
With this notation we can condense both pedal coordinates $p,p_c$ into a single quantity:
$$
\frac{z \dot {\bar z}}{\abs{\dot z}}=p_c+\ii p.
$$

We are now ready to introduce very useful \textit{power transform} $M_\alpha$ which maps a point $z$ to its power $z^\alpha$:
$$
M_\alpha: \qquad \tilde z:=z^\alpha,\qquad \alpha\in\mathbb{R}.
$$
This transform translates into pedal coordinates very easily:
\begin{align*}
\tilde r&:=\abs{\tilde z}=\abs{z^\alpha}=r^\alpha.\\
\dot{\tilde z}&=\alpha z^{\alpha-1}\dot z.\\
\abs{\dot{\tilde z}}&= |\alpha| r^{\alpha-1} \abs{\dot z}.\\
\tilde p_c+\ii\tilde p&:=\frac{\tilde z\overline{\dot{\tilde z}}}{\abs{\dot{\tilde z}}}=\frac{z^\alpha \alpha \bar z^{\alpha-1}\dot z}{|\alpha|r^{\alpha-1}\abs{\dot z}}=\sgn(\alpha) r^{\alpha-1}\frac{z\dot{\bar z}}{\abs{\dot z}}=\sgn(\alpha) r^{\alpha-1}(p_c+\ii p).
\end{align*}
This yields:
\begin{align*}
f(p,r,p_c)&=0, & &\stackrel{M_\alpha }{\longrightarrow} & f\zav{p r^{\alpha-1},r^\alpha,p_c r^{\alpha-1}}&=0, &\alpha>0.\\
f(p,r,p_c)&=0, & &\stackrel{M_{-\alpha} }{\longrightarrow} & f\zav{-p r^{\alpha-1},r^\alpha,-p_c r^{\alpha-1}}&=0, &\alpha>0.\\
\end{align*}

With this transform \textit{a lot} of new curves can be obtained. For instance, powers of a circle passing through origin 
$$
2Rp=r^2 \qquad \stackrel{M_\alpha }{\longrightarrow}\qquad 2R p=r^{\alpha+1},
$$
are nothing else than famous \textit{sinusoidal spirals}, i.e. family of curves $\sigma_n(a)$, given in polar coordinates
$$
r^n=a^n\sin\zav{n\varphi+\varphi_0}.
$$
Specific examples contains many famous curves, e.g.
\begin{align*}
n& & a^n p &=r^{n+1} & &\text{Curve} &\text{Pedal point:}\\
n&=0 & p&=r & &\text{Concentric circle }\abs{x}=R. & \text{Center.}\\
n&=-1 & p&=a & &\text{Line.} & \text{A point distant }a.\\
n&=1 & a p&=r^2 & &\text{Circle. } & \text{On the circle.}\\
n&=2 & a^2 p &=r^3 & &\text{Lemniscate of Bernoulli.} & \text{Center.}\\
n&=-2 & rp &=a^2 & &\text{Rectangular hyperbola.} & \text{Center.}\\
n&=-\frac12 & a^{-\frac12} p &=r^{\frac12} & &\text{Parabola.} & \text{Focus}\\
n&=\frac12 & a^{\frac12} p &=r^{\frac32} & &\text{Cardioid.} & \text{Cusp.} \\
n&=-\frac13 &  p^3 &= a r^{2} & &\text{Tschirnhausen cubic.} & \text{Center.} \\
\end{align*}
Sinusoidal spirals are famously invariant under a number of transforms, for example: 
\begin{align*}
\sigma_n(a) &\stackrel{P}{\longrightarrow} \sigma_{\frac{n}{n+1}}(a) & \text{Pedal.}\\
\sigma_n(a) &\stackrel{I}{\longrightarrow} \sigma_{-n}\zav{\frac{1}{a}} & \text{Inverse.}\\
\sigma_n(a) &\stackrel{D}{\longrightarrow} \sigma_{-\frac{n}{n+1}}\zav{\frac{1}{a}} & \text{Dual.}\\
\sigma_n(a) &\stackrel{M_\alpha}{\longrightarrow} \sigma_{\alpha n}\zav{a^{\frac{1}{\alpha}}} & \text{Complex power.}
\end{align*}

Let us also mention that the transform $M_{\frac12}$, i.e. complex square root, has very nice properties.

It maps central conics into focal one:
$$
\frac{L^2}{p^2}= a r^2+c \qquad \stackrel{M_{\frac12} }{\longrightarrow} \qquad \frac{L^2}{p^2}= a+\frac{c}{r}.
$$

It also maps a circle into a central Cassini oval, which is the locus of points such that product of distances from two foci is constant:
$$
\abs{z-a^2}=R, \qquad \stackrel{M_{\frac12} }{\longrightarrow} \qquad \abs{\tilde z^2-a^2}=R,\qquad \Rightarrow \qquad \abs{\tilde z-a}\abs{\tilde z+a}=R.
$$
This gives us rather nice pedal equation for a central Cassini oval:
$$
2pR= r^2+R^2-\abs{a}^2\qquad \stackrel{M_{\frac12} }{\longrightarrow} \qquad
2p R= (r+ R^2-\abs{a}^2)\sqrt{r}.
$$
\begin{example} (Shifted sinusoidal spirals)
As we saw, sinusoidal spirals are just rescaled complex powers of a circle passing through the origin, i.e.
$$
\sigma_n(a):=S_{\beta}M_\alpha(ap=r^2).
$$ 
This begs the question what are rescaled powers of a circle in general position, that is curves of the form
$$
S_{\beta}M_\alpha(2Rp=r^2+R^2-a^2).
$$
$$
2Rp=r^2+R^2-a^2\qquad \stackrel{M_\alpha}{\longrightarrow}\qquad 
 2Rp r^{\alpha-1}=r^{2\alpha}+R^2-a^2\qquad \stackrel{S_\beta}{\longrightarrow}\qquad 2Rp\beta= \beta^{2\alpha} r^{1+\alpha}+(R^2-a^2) r^{1-\alpha}.
$$
This gives us with little bit of cleaning the following family of curves:
$$
\sigma_\alpha(a,b): \qquad p=a r^{\alpha+1}+b r^{1-\alpha},
$$
which we will call \textit{Shifted sinusoidal spirals}.
Obviously, the original sinusoidal spirals are a special case: 
$$
\sigma_n(a)=\sigma_n(a^{-n},0).
$$
Also, here is a short list of properties
\begin{align*}
\sigma_1(a,b) &\longrightarrow  2Rp=r^2+R^2-\abs{a}^2, & R=\frac{1}{2a},\quad \abs{a}=\frac{\sqrt{1-4ab}}{2a}.\\
\sigma_\alpha(a,b)&=\sigma_{-\alpha}(b,a) &\text{Symmetry}\\
\sigma_\alpha(a,b) &\stackrel{S_\beta}{\longrightarrow} \sigma_{\alpha}\zav{a\beta^\alpha,b\beta^{-\alpha}} & \text{Scaling}\\
\sigma_\alpha(a,b) &\stackrel{I}{\longrightarrow} \sigma_{-\alpha}\zav{a,b} & \text{Inverse}\\
\sigma_\alpha(a,b) &\stackrel{M_\beta}{\longrightarrow} \sigma_{\alpha \beta}\zav{a,b} & \text{Complex power}.
\end{align*}
As we can see we have lost the invariance under Pedal transform.

Classical sinusoidal spirals can be interpreted via Theorem \ref{T1} as the zero energy solution to a central force problem with force varying as a power of the distance, i.e.:
$$
{\bf \ddot x}=M r^{\alpha-1} {\bf x},\qquad \stackrel{Th. \ref{T1}}{\Rightarrow} \qquad\frac{L^2}{p^2}=\frac{2M}{\alpha+1} r^{\alpha+1}+c. 
$$
We get $n=\alpha$ sinusoidal spiral when the quantity $c:=\abs{\bf\dot x}^2-\frac{2M}{\alpha+1}r^{\alpha+1}$ equals zero.

We can also, with the aid of Theorem \ref{T2}, reinterpret shifted sinusoidal spiral as a zero energy solution of a force problem, namely:
$$
{\bf \ddot x}=-\zav{a r^{\alpha-1}+ b r^{-1-\alpha}}\abs{\bf \dot x}^3 {\bf x},\qquad \stackrel{Th. \ref{T2}}{\Rightarrow} \qquad p=\frac{La}{\alpha+1} r^{\alpha+1}+\frac{Lb}{1-\alpha} r^{1-\alpha}+c. 
$$
Again, shifted sinusoidal spirals $\sigma_{\alpha}$ are solutions when
$$
c:=\frac{1}{\abs{\bf \dot x}}-\frac{a}{\alpha+1}r^{\alpha+1}-\frac{b}{1-\alpha}r^{1-\alpha}=0.
$$

\end{example}

\begin{remark}
Much more detailed exposition of pedal coordinates, greater list of curves and transforms given in pedal coordinates can be found in \cite{Blaschke6,Yates,Zwikker}. 
\end{remark}

%
\section{Proofs}
\subsection{Proof of Theorem \ref{T2}}
\begin{proof}
We are going to demonstrate the technique on the formula (\ref{T2F2}). We want to wish to express the solution of 
\begin{equation}\label{tosolve}
\ddot {\bf x}=f  \frac{{\bf x}}{\abs{\dot {\bf x}}^\alpha}+g \frac{\dot {\bf x}^\perp}{\abs{\dot {\bf x}}^\beta}, 
\end{equation}
where $f,g$ are functions of pedal coordinates and $\alpha\not=-2$, $\beta\not=-1$.

Multiply (\ref{tosolve}) by $ {\bf \dot x}$ to obtain:
\begin{align*}
 {\bf \ddot x}\cdot {\bf \dot x}&= f  \frac{{\bf x}\cdot {\bf \dot x}}{\abs{ {\bf \dot x}}^\alpha}+g \frac{ {\bf \dot x}^\perp\cdot {\bf \dot x}}{\abs{ {\bf \dot x}}^\beta}\\
\abs{{\bf \dot x}} \partial_t \abs{{\bf \dot x}}&= f \frac{r \dot r}{\abs{ {\bf \dot x}}^\alpha}\\
\abs{{\bf \dot x}}^{\alpha+1}\partial_t \abs{{\bf \dot x}}&=f r\dot r\\
\abs{{\bf \dot x}}^{\alpha+2}&=(\alpha+2)\int f r{\rm d}r.
\end{align*}
Similarly multiply (\ref{tosolve}) by $ {\bf x}^\perp$ to obtain:
\begin{align*}
 {\bf \ddot x}\cdot {\bf x}^\perp&= f  \frac{{\bf x}\cdot {\bf x}^\perp}{\abs{ {\bf \dot x}}^\alpha}+g \frac{ {\bf \dot x}^\perp\cdot {\bf x}^\perp}{\abs{ {\bf \dot x}}^\beta}\\
\partial_t \zav{{\bf \dot x}\cdot {\bf x}^\perp}&= g \frac{ {\bf \dot x}\cdot {\bf x}}{\abs{ {\bf \dot x}}^\beta}\\
\zav{{\bf \dot x}\cdot {\bf x}^\perp}^\beta\partial_t \zav{{\bf \dot x}\cdot {\bf x}^\perp}&= g \frac{ {\bf \dot x}\cdot {\bf x}}{\abs{ {\bf \dot x}}^\beta}\zav{{\bf \dot x}\cdot {\bf x}^\perp}^\beta\\
\partial_t\frac{\zav{{\bf \dot x}\cdot {\bf x}^\perp}^{\beta+1}}{\beta+1}&= g  r\dot r p^\beta\\
\zav{{\bf \dot x}\cdot {\bf x}^\perp}^{\beta+1}&=(\beta+1)\int g p^\beta r{\rm d}r\\
p^{\beta+1}\abs{{\bf \dot x}}^{\beta+1}&=(\beta+1)\int g p^\beta r{\rm d}r\\
\abs{{\bf \dot x}}^{\beta+1}&=\frac{(\beta+1)\int g p^\beta r{\rm d}r}{p^{\beta+1}}\\
\abs{{\bf \dot x}}^{(\beta+1)(\alpha+2)}&=\zav{\frac{(\beta+1)\int g p^\beta r{\rm d}r}{p^{\beta+1}}}^{\alpha+2}\\
\zav{(\alpha+2)\int f r{\rm d}r}^{\beta+1}&=\zav{\frac{(\beta+1)\int g p^\beta r{\rm d}r}{p^{\beta+1}}}^{\alpha+2},
\end{align*}
which is what we want.

The formula (\ref{T2F1}) is just a special case for $\alpha=\beta=0$. The formulas (\ref{T2F3})-(\ref{T2F6}) are limiting cases. The formula (\ref{T2F7}) is actually a corollary of (\ref{T2F1}) with $g=0$. Since we have no Lorentz-like term, we have a conservation law:
$$
L={\bf \dot x}\cdot {\bf x}^\perp.
$$ 
But then we can replace any occurrence of the quantity $\abs{\bf \dot x}$ with 
$$
\abs{\bf \dot x}=\frac{L}{p},
$$
which makes our central term $f$ dependent on pedal coordinates only and thus suitable for (\ref{T2F1}). 
\end{proof}

\subsection{Proof of Proposition \ref{P1}}
\begin{proof}
Since the functional
$$
\mathcal{L}=\inte{s_0}{s_1}f(r)\dd s=\inte{\varphi_0}{\varphi_1}f(r)\sqrt{{r'_\varphi}^2+r^2}\dd \varphi,
$$
does not depend explicitly on $\varphi$, we can use Beltrami identity to obtain Euler-Lagrange equation in the form
$$
f(r)\sqrt{{r'_\varphi}^2+r^2}-f(r)\frac{{r'_\varphi}^2}{\sqrt{{r'_\varphi}^2+r^2}}=L,
$$
substituting $r'_\varphi=\frac{rp_c}{p}$ (Proposition \ref{P2}) we get what we want.
\end{proof}
\section{Variational problems}
\begin{example}(Heron's problem)
Heron of Alexandria argue that the light travel so that is always picks the shortest path. It was obvious to him that shortest path between two points is a straight line. But we can actually prove that finding a minimum of the functional that evaluates arc length which in polar coordinates has the form:
$$
\mathcal{L}=\inte{\varphi_0}{\varphi_1}\sqrt{{r'_\varphi}^2+r^2}\dd \varphi.
$$
Hence $f\equiv 1$ and thus the pedal equation becomes
$$
\frac{L}{p}=1,\qquad p=L,
$$
which indeed belongs to a line.
\end{example} 
\begin{example}\label{CBex}(Central Brachistochrone)
The classical problem of Brachistochrone involves finding a path between two points $A,B$ on which a bead slides frictionlessly under influence of gravity in shortest time possible.

The problem is solved, realizing that minimizing  the time corresponds to the maximizing the speed of travel since
$$
\dd t=\frac{\dd s}{\abs{\bf \dot x}},
$$
that is (instantaneous)time is (instantaneous) path divided by (instantaneous) speed.

Since total energy
$$
E=\frac12 \abs{\bf \dot x}^2+U,
$$
is conserved, we can relate speed to the potential energy as 
$$
\abs{\bf \dot x}=\sqrt{2E-2U}.
$$
Substituting this we get functional to minimize
$$
\mathcal{L}:=\inte{t_0}{t_1}\dd t=\inte{s_0}{s_1}\frac{1}{\sqrt{2E-2U}}\dd s. 
$$

Potential energy is normally given to rise in proportion with elevation $y$, $U=-gy$, but this is only a limiting approximation to the ``correct'' Newtonian potential $U=-\frac{M}{r}$ induced by a central body.
 
If we assume this correct $U$, we end up with the functional for the central Brachistochrone:
$$
\mathcal{L}=\inte{s_0}{s_1}\frac{1}{\sqrt{\frac{2M}{r}-\frac{2M}{r_0}}}\dd s,
$$
where $r_0$ is the starting distance. This is a variational problem
on which Proposition \ref{P1} can be applied to get
$$
\frac{L}{p}=\frac{1}{\sqrt{\frac{2M}{r}-\frac{2M}{r_0}}}.
$$
This is an interesting curve. It can be also written in the form
$$
\zav{\frac{L}{p}}^{-2}=\frac{2M}{r}-\frac{2M}{r_0},
$$
which can be readily interpreted by Theorem \ref{T2} as a solution to
$$
{\bf \ddot x}=\frac{M\abs{\dot x}^4}{2r^3}{\bf x}.
$$

\end{example}
\begin{example}(Gravity train)
Gravity train was first described in a note from Robert Hook to Isaac Newton. It is a problem of finding the shape of a tunnel through Earth connecting points $A,B$ on its surface that a train would traverse using only gravity (and again no friction is assumed) in shortest possible time. 

It is again a Brachistochrone problem but with different potential energy $U$. Assuming that Earth is a perfect ball of homogeneous density we can use Newton's Shell theorem to obtain potential energy of the form
$$
U:=\alpha r^2,
$$
for some constant $\alpha$ (depending on the density). We will choose units so that $\alpha=1$.

Again, from the conservation of total energy we have
$$
E=\frac12\abs{\bf \dot x}^2+U,\qquad \Rightarrow \qquad \abs{\bf \dot x}=2\sqrt{E-r^2}.
$$ 
Our train is not moving in departure and destination stations $A,B$ located on the surface of the Earth. Thus we must have $E=R^2$, where $R$ is the Earth's radius.

By exactly the same logic as in the previous example, we thus obtain a functional to minimize with 
$$
f\propto\frac{1}{\sqrt{R^2-r^2}}.
$$ 

Hence we obtain solutions in the form
$$
\frac{L}{p}=\frac{1}{\sqrt{R^2-r^2}},\qquad \Longrightarrow\qquad  \frac{L^2(R^2-r^2)^2}{p^2}=R^2-r^2.
$$ 
Comparing with (\ref{epi}), we can recognize this equation to belong to particular species of epicycles. In fact, further analysis reveals that these are exactly ``hypocycloids'', i.e.  a circle revolving inside a bigger circle. This is surprisingly nice result remembering that the solution to classical Brachistochrone is the cycloid, i.e. a circle revolving on a line. 
\begin{center}
\definecolor{qqccqq}{rgb}{0.,0.8,0.}
\definecolor{qqttcc}{rgb}{0.,0.2,0.8}
\definecolor{qqqqff}{rgb}{0.,0.,1.}
\definecolor{uuuuuu}{rgb}{0.26666666666666666,0.26666666666666666,0.26666666666666666}
\begin{tikzpicture}[line cap=round,line join=round,>=triangle 45,x=1.0cm,y=1.0cm,scale=0.5]
\clip(4.22,2.32) rectangle (15.34,11.6);
\draw [line width=1.2pt,color=qqttcc] (9.68,6.94) circle (3.70264770130781cm);
\draw [color=qqccqq] (11.841901199278373,8.815767217020941) circle (0.8404248854020063cm);
\draw (9.68,6.94)-- (13.38,6.8);
\draw[line width=1.6pt] (11.121535660791613,8.38288446272868)(6.70909968517315,7.77336855083781) -- (6.7090997135550605,7.773368525916794) -- (6.709099770318894,7.773368476074762) -- (6.709099883846552,7.773368376390685) -- (6.709100110901888,7.773368177022485) -- (6.709100565012655,7.7733677782859) -- (6.709101473234482,7.7733669808119865) -- (6.709103289679407,7.773365385861188) -- (6.709106922574341,7.773362195947717) -- (6.7091141883844525,7.7733558160732645) -- (6.7091287200857295,7.773343056134316) -- (6.709157783812477,7.773317535496256) -- (6.70921591256265,7.773266491179467) -- (6.709332175248604,7.773164390383177) -- (6.709564721355129,7.772960140139409) -- (6.71002989644319,7.772551445044397) -- (6.710960577612616,7.771733276402969) -- (6.712823259869593,7.770093825146812) -- (6.716553871604138,7.766802465467565) -- (6.724035824180233,7.760169908338276) -- (6.7390805650876935,7.746705387571737) -- (6.769476708906255,7.718978555388705) -- (6.831361749930493,7.660336881300446) -- (6.894448024609497,7.597463040960349) -- (6.958428512328412,7.530398347531965) -- (7.02299748616198,7.459205852044757) -- (7.0878520627537505,7.383970144559449) -- (7.1526937351258395,7.304797045289891) -- (7.217229880608357,7.2218131872026134) -- (7.281175236205948,7.135165492168631) -- (7.34425333388577,7.045020543285045) -- (7.406197888477037,6.951563856513457) -- (7.466754131114659,6.854999055296011) -- (7.525680081438582,6.755546952304538) -- (7.554459872880865,6.704811620822003) -- (7.582747752073719,6.653444542952192) -- (7.6105175925098285,6.601477692888439) -- (7.637744279261621,6.548943915747234) -- (7.664403739874601,6.495876889360314) -- (7.690472973892973,6.442311084992691) -- (7.715930080979902,6.388281727036585) -- (7.740754287596781,6.333824751732035) -- (7.764925972207581,6.278976764966578) -- (7.788426688976284,6.223774999207308) -- (7.8112391899273605,6.1682572696198665) -- (7.8333474455410945,6.112461929430057) -- (7.854736663757801,6.056427824584262) -- (7.875393307366683,6.000194247766586) -- (7.895305109757467,5.943800891830442) -- (7.9144610890147895,5.887287802703996) -- (7.932851560337604,5.830695331828932) -- (7.95046814676791,5.7740640881927625) -- (7.967303788215297,5.717434890015648) -- (7.983352748766018,5.660848716152769) -- (7.998610622267396,5.604346657273903) -- (8.013074336180708,5.547969866882243) -- (8.026742153697743,5.491759512234119) -- (8.039613674118618,5.435756725222367) -- (8.051689831490519,5.380002553285286) -- (8.062972891509284,5.32453791040328) -- (8.073466446688103,5.2694035282457925) -- (8.083175409799557,5.214639907529935) -- (8.092106005599693,5.16028726965283) -- (8.100265760844863,5.106385508658825) -- (8.107663492614268,5.0529741436024596) -- (8.11430929495337,5.000092271367709) -- (8.12021452385538,4.947778520003123) -- (8.125391780600266,4.896071002632183) -- (8.12985489347274,4.845007271997226) -- (8.133618897882759,4.794624275694527) -- (8.139115628329085,4.696044987442614) -- (8.1420255442042,4.600614963610524) -- (8.142510000968327,4.508603618869773) -- (8.140747237039202,4.420266678600519) -- (8.136931385482315,4.33584496881837) -- (8.131271402703621,4.255563280385796) -- (8.123989919642916,4.179629313258819) -- (8.115322021362594,4.108232706112678) -- (8.105513961291738,4.041544156255156) -- (8.094821816720179,3.979714634277113) -- (8.083510092437232,3.9228746974072553) -- (8.060119377743185,3.8245803393509625) -- (8.033977352531334,3.7362270845672305) -- (8.012624634479879,3.676984924159583) -- (7.999660930168899,3.6465612349609144) -- (7.998543828014876,3.6440096104607065) -- (8.012484932018433,3.6677605729191542) -- (8.044352651093881,3.715669287287571) -- (8.096584372949817,3.7850789370132016) -- (8.171110490266546,3.872897986025821) -- (8.217182993283568,3.9226477593890223) -- (8.269292416839885,3.975689164955388) -- (8.327512444145642,4.0315546461610925) -- (8.391876341633823,4.089767693680587) -- (8.46237646173816,4.1498462248925945) -- (8.538964030171984,4.211305995746733) -- (8.621549219353739,4.27366402117849) -- (8.710001507627362,4.336441980164538) -- (8.804150321930612,4.399169581621942) -- (8.903785959585482,4.461387867631596) -- (8.955585394102203,4.492166485828076) -- (9.00866078293624,4.522652430908643) -- (9.062975483489133,4.5527929356720565) -- (9.118490678655359,4.582536524044534) -- (9.175165436176101,4.611833100806813) -- (9.232956771689922,4.640634038806049) -- (9.291819715371368,4.66889226350014) -- (9.351707382042658,4.696562334687138) -- (9.412571044636916,4.723600525276732) -- (9.474360210886253,4.749964896966658) -- (9.537022703101847,4.775615372691765) -- (9.600504740907905,4.800513805719338) -- (9.664751026786517,4.82462404527028) -- (9.729704834284924,4.847911998551485) -- (9.795308098732743,4.870345689091446) -- (9.861501510311705,4.891895311277224) -- (9.928224609316857,4.912533280997862) -- (9.995415883443773,4.932234282305862) -- (10.063012866933487,4.9509753100156075) -- (10.130952241403142,4.968735708164525) -- (10.1991699381871,4.985497204269971) -- (10.267601242011773,5.001243939322475) -- (10.336180895823578,5.015962493463078) -- (10.404843206588707,5.029641907300327) -- (10.473522151880879,5.042273698829964) -- (10.542151487072264,5.053851875928132) -- (10.610664852941689,5.064372944396572) -- (10.678995883512957,5.07383591154621) -- (10.747078313936868,5.082242285313148) -- (10.814846088229142,5.089596068909095) -- (10.882233466677432,5.095903751015947) -- (10.949175132731316,5.101174291542034) -- (11.015606299189404,5.1054191029654135) -- (11.081462813499407,5.108652027297229) -- (11.146681261988814,5.110889308705717) -- (11.211199072844446,5.112149561849462) -- (11.274954617663363,5.112453735975242) -- (11.33788731139779,5.1118250748441945) -- (11.399937710521575,5.110289072556348) -- (11.461047609247313,5.107873425351542) -- (11.521160133627486,5.104607979471291) -- (11.580219833376313,5.100524675173331) -- (11.638172771253275,5.095657486997011) -- (11.694966609852866,5.090042360384825) -- (11.750550695650697,5.083717144770967) -- (11.804876140160033,5.076721523254451) -- (11.857895898057695,5.069096938980814) -- (11.909564842144807,5.060886518360729) -- (12.00867979728447,5.042888608307571) -- (12.101900987321622,5.023103365909916) -- (12.188943304786273,5.001927901077863) -- (12.269558323501087,4.979777931854013) -- (12.343535816839006,4.9570847201508235) -- (12.410705013261905,4.934291893682732) -- (12.470935580500994,4.9118521763852945) -- (12.570265648237182,4.869868372033166) -- (12.641311884633328,4.834809233942952) -- (12.684522780750644,4.810279982757955) -- (12.700991096128492,4.799708278797393) -- (12.692417854408058,4.80624203926734) -- (12.661059709816808,4.832654791873839) -- (12.609661164622475,4.8812612204065395) -- (12.541373539237734,4.953845274278076) -- (12.459662965173013,5.051602864237218) -- (12.414918737242493,5.1101241980888545) -- (12.368209976228304,5.175100760282182) -- (12.320012849463032,5.246501198323274) -- (12.270803506673493,5.324252856503989) -- (12.221054670987197,5.408242026687467) -- (12.171232260188432,5.498314483772389) -- (12.121792061863461,5.594276302252823) -- (12.097354154833376,5.644394527941492) -- (12.073176485696045,5.695894948328201) -- (12.049311748858877,5.748742687327784) -- (12.025811415629677,5.802900640095599) -- (12.002725642847805,5.858329529518985) -- (11.980103183913624,5.91498796648471) -- (11.95799130237088,5.972832513819151) -- (11.93643568819388,6.031817753790238) -- (11.915480376925498,6.09189635905474) -- (11.895167671807428,6.153019166928295) -- (11.875538069038866,6.2151352568497185) -- (11.856630186294582,6.278192030905056) -- (11.838480694626416,6.342135297273372) -- (11.829701256899856,6.374422015460825) -- (11.827537869059505,6.382525321477765) -- (11.826460939777109,6.38658164183191) -- (11.825923669746974,6.388610962723897) -- (11.825856567062562,6.3888646821156465) -- (11.825823020395077,6.388991546330813) -- (11.825806248230139,6.389054979567998) -- (11.825802055310639,6.3890708379949865) -- (11.825799958869178,6.389078767226088) -- (11.825799434760693,6.38908074953574) -- (11.825799402003943,6.38908087343011)(13.37999999985866,6.799999996264545) -- (13.38000000125866,6.800000033264546) -- (13.38,6.8) -- (13.38,6.8) -- (13.37999999999986,6.8) -- (13.379999999999399,6.8) -- (13.379999999997509,6.8) -- (13.379999999989856,6.8000000000003835) -- (13.379999999959065,6.800000000001549) -- (13.379999999835547,6.800000000006223) -- (13.379999999340763,6.80000000002495) -- (13.379999997360214,6.800000000099927) -- (13.379999989435179,6.8000000004000976) -- (13.37999995772941,6.800000001602208) -- (13.37999983089533,6.800000006420803) -- (13.379999323539261,6.800000025773812) -- (13.379997294093695,6.800000103809719) -- (13.37998917641886,6.800000420933789) -- (13.379956707204608,6.80000172924967) -- (13.3798268446011,6.8000072809722205) -- (13.379307530899695,6.800032034419421) -- (13.377231701950876,6.800151399878186) -- (13.368945035429949,6.800791213872113) -- (13.336014777730815,6.804637745101998) -- (13.27053815638065,6.815906191969983) -- (13.177528852279126,6.837562067249879) -- (13.121354690266097,6.853221448452537) -- (13.059152893587227,6.872554617040203) -- (12.991277101070596,6.895867463402641) -- (12.918109776782082,6.9234359747089425) -- (12.840059594435742,6.9555040434121995) -- (12.757558645254536,6.992281501300751) -- (12.671059488741486,7.033942392982405) -- (12.581032066914496,7.080623501015099) -- (12.487960503507585,7.132423133139209) -- (12.392339810440799,7.189400180235326) -- (12.294672524504708,7.2515734517396355) -- (12.195465297689006,7.318921293310251) -- (12.145442815864278,7.354517816773157) -- (12.095225464904733,7.391381489565456) -- (12.04487620541553,7.429498228056174) -- (11.994457613003657,7.468851452722758) -- (11.944031772809286,7.5094221095785825) -- (11.893660174985866,7.551188695969474) -- (11.843403611310821,7.594127290695415) -- (11.793322073106888,7.638211588405824) -- (11.693919434250796,7.729700386433321) -- (11.595912400427522,7.825398534529168) -- (11.499742056584912,7.925014226464182) -- (11.452477546443838,7.976190763786563) -- (11.405827497103383,8.028222216201147) -- (11.35984040410036,8.081063039038074) -- (11.314563051369323,8.134665865964887) -- (11.27004043458478,8.188981584946564) -- (11.226315687709192,8.24395941731687) -- (11.183430012874302,8.299546999823963) -- (11.141422613717472,8.355690469507909) -- (11.100330632288609,8.412334551262926) -- (11.060189089637138,8.469422647932385) -- (11.021030830181894,8.52689693278018) -- (10.9828864699603,8.58469844417791) -- (10.94578434884635,8.642767182343503) -- (10.909750486820014,8.701042207963246) -- (10.874808544363514,8.75946174252585) -- (10.840979787052845,8.817963270194346) -- (10.808283054405411,8.8764836410387) -- (10.776734733037337,8.93495917544985) -- (10.746348734176383,8.993325769553538) -- (10.717136475568855,9.051519001440886) -- (10.689106867811129,9.109474238030936) -- (10.662266305128746,9.167126742379379) -- (10.63661866061824,9.224411781246936) -- (10.612165285958996,9.281264732740357) -- (10.588905015594705,9.337621193838759) -- (10.566834175376135,9.393417087618497) -- (10.545946595649083,9.448588769989884) -- (10.526233628763693,9.503073135760317) -- (10.507684170973521,9.556807723839155) -- (10.490284688685058,9.609730821401511) -- (10.474019249010933,9.661781566829745) -- (10.458869554572212,9.712900051253524) -- (10.431832627484146,9.812105963363454) -- (10.40898182250572,9.906892095113518) -- (10.390090139486498,9.99682342832729) -- (10.37489748101311,10.081489290304873) -- (10.36311270478009,10.16050609061157) -- (10.354415895885491,10.233519874588234) -- (10.3484608430399,10.300208675827722) -- (10.344877701251868,10.360284651262083) -- (10.343275822248632,10.413495984019523) -- (10.344367239427328,10.498507272454674) -- (10.348310012413405,10.554005011367028) -- (10.35154805892356,10.579406660733575) -- (10.350501752332258,10.574798588633813) -- (10.341676137992048,10.540931748707843) -- (10.321766872465066,10.479199241849903) -- (10.287760911325169,10.391596391393815) -- (10.264632744392477,10.338868253781417) -- (10.237029117461953,10.280664447612738) -- (10.204694821284493,10.217376259999853) -- (10.167408186994834,10.14941949280259) -- (10.124982903534514,10.077231574866365) -- (10.07726958770722,10.001268529850684) -- (10.02415709574086,9.922001820808429) -- (9.965573567036223,9.839915093532154) -- (9.901487192652557,9.755500841387207) -- (9.831906703003174,9.669257014895285) -- (9.756881571195594,9.581683599714108) -- (9.676501930437102,9.493279186875721) -- (9.590898205924123,9.404537559195978) -- (9.500240463628513,9.31594431765092) -- (9.404737480371809,9.227973571233102) -- (9.355244560717972,9.18436585497156) -- (9.304635541525787,9.141084713353395) -- (9.252946828559708,9.098184760194425) -- (9.200216974581004,9.055719307246111) -- (9.146486619804733,9.013740274341913) -- (9.091798428670565,8.972298102070026) -- (9.036197023036673,8.931441667124474) -- (8.979728911911945,8.891218200481688) -- (8.922442417848627,8.85167320854531) -- (8.864387600122122,8.812850397395948) -- (8.805616174831664,8.77479160027805) -- (8.746181432059657,8.737536708449616) -- (8.686138150233658,8.70112360551514) -- (8.62554250783933,8.665588105355823) -- (8.564451992637311,8.630963893764756) -- (8.502925308541636,8.597282473888498) -- (8.441022280321167,8.564573115569623) -- (8.378803756289333,8.53286280867805) -- (8.316331509151171,8.502176220512027) -- (8.253668135179645,8.472535657342457) -- (8.19087695189638,8.44396103016704) -- (8.128021894434326,8.416469824733328) -- (8.065167410762394,8.390077075882331) -- (8.002378355954058,8.364795346256738) -- (7.939719885683381,8.34063470941022) -- (7.877257349133712,8.317602737346492) -- (7.81505618150485,8.295704492509191) -- (7.753181796305434,8.274942524235751) -- (7.69169947761784,8.255316869680644) -- (7.630674272522687,8.236825059205595) -- (7.570170883869674,8.219462126226544) -- (7.510253563581428,8.20322062149927) -- (7.450986006675363,8.188090631817984) -- (7.3924312461880595,8.17405980309323) -- (7.334651549184478,8.161113367768063) -- (7.277708314032785,8.14923417652369) -- (7.221661969123444,8.138402734218243) -- (7.166571873208305,8.12859723999514) -- (7.112496217533041,8.119793631489987) -- (7.059491929932986,8.111965633057876) -- (7.007614581058869,8.105084807936013) -- (6.956918292895492,8.099120614249541) -- (6.859277611740362,8.089809790263699) -- (6.766970572266474,8.083749084449337) -- (6.680369267819115,8.080624930771085) -- (6.599814765128625,8.080096677783715) -- (6.525614881494592,8.081799185704662) -- (6.458042188741521,8.08534559831807) -- (6.397332257791619,8.09033027022082) -- (6.343682156023806,8.096331828833412) -- (6.297249207825851,8.102916349656134) -- (6.226459827086654,8.116055485445616) -- (6.185398075966052,8.126150870192943) -- (6.173829389337995,8.129618899351366) -- (6.190860540199884,8.122997047420226) -- (6.234967474343825,8.103048521596826) -- (6.3040403328915975,8.066860755307275) -- (6.395444401247412,8.01193497547883) -- (6.448572015736051,7.9767971074831) -- (6.5060952799308485,7.936264326031401) -- (6.567572709569544,7.890173219954471) -- (6.632546177199034,7.838398348924068) -- (6.700544128605016,7.780853388821411) -- (6.70488347073723,7.777063969934801) -- (6.7070568303560165,7.7751607334764365) -- (6.708144425810191,7.774206983180291) -- (6.708688451636478,7.773729574978314) -- (6.708960521472894,7.773490737609328) -- (6.709096570609234,7.773371285607279) -- (6.7090986964521715,7.773369418993505) -- (6.709099227913271,7.773368952339216) -- (6.709099493643863,7.773368719011943) -- (6.709099626509184,7.773368602348275) -- (6.709099654891097,7.773368577427262)(11.121535660791613,8.38288446272868);
\draw (11.841901199278373,8.815767217020941)-- (11.121535660791613,8.38288446272868);
\begin{scriptsize}
\draw [fill=uuuuuu] (9.68,6.94) circle (1.5pt);
\draw [fill=qqqqff] (13.38,6.8) circle (2.5pt);
\draw[color=qqqqff] (13.64,7.17) node {$A$};
\draw [fill=uuuuuu] (12.476692997302312,9.366542453541712) circle (1.5pt);
\draw [fill=uuuuuu] (11.841901199278373,8.815767217020941) circle (1.5pt);
\draw[color=black] (10.84,6.45) node {$R$};
\draw [fill=uuuuuu] (11.121535660791613,8.38288446272868) circle (1.5pt);
\draw [fill=qqqqff] (10.350501752332258,10.574798588633813) circle (2.5pt);
\draw[color=qqqqff] (10.62,10.95) node {$B$};
\end{scriptsize}
\end{tikzpicture}
\\
Central Brachistochrone.
\end{center}
\end{example}
\begin{example} \label{ExCC}(Central Catenary)
We can similarly ``fix'' the classical problem of Catenary -- i.e. what is the shape of a freely hanging chain of given length fixed on both ends under influence of gravity. Since the chain is motionless, the quantity to minimize is just the potential energy $U$.

For a classical Catenary the approximation used is, again, $U=-gy$. If we, instead, use $U=-\frac{M}{r}$, the problem of \textit{Central Catenary} can be rephrased now in this way:

\textit{Which shape assumes a chain of given length that is attached to two geostationary satellites?}

Introducing the Lagrange multiplier $\lambda$ to accommodate the fixed length of the rope condition, we end up (using suitable units) with the following functional:
$$
\mathcal{L}:=\inte{s_0}{s_1}\zav{\frac{M}{r}+\lambda }\dd s.
$$ 
Thus by Proposition \ref{P2} we obtain extremal curves given by
$$
\frac{L}{p}=\frac{M}{r}+\lambda.
$$
This result can be, of course, interpreted via Theorem \ref{T2} as a solution of
$$
{\bf \ddot x}=-\frac{M \abs{\bf \dot x}}{r^3}{\bf x},
$$
but this time we can do better than this.

Squaring both sides we get
$$
\frac{L^2}{p^2}=\frac{M^2}{r^2}+\frac{2\lambda M}{r}+\lambda^2.
$$
Without the $\frac{M^2}{r^2}$ this would represent a hyperbola (since the energy $\lambda^2$ is positive).

But we can get rid of this term using harmonics $H_k$:
$$
\frac{L^2k^2}{p^2}=\frac{M^2+L^2(k^2-1)}{r^2}+\frac{2\lambda M}{r}+\lambda^2,
$$
Thus for $L^2\geq M^2$ the central Catenary is $\sqrt{1-\frac{M^2}{L^2}}$-harmonic of an hyperbola!

Exactly the same result can be obtained noticing that central Catenary is Dual to a curve with linear pedal equation:
$$
\frac{L}{p}=\frac{M}{r}+\lambda \qquad \stackrel{D}{\longrightarrow} \qquad Lr=Mp+\lambda,
$$
which we already know is a special case of Dual Harmonics of a circle that does not contain the origin. But Dual of Dual Harmonics are jut Harmonics of a Dual: $D H^\star_k=D D H_k D= H_k D$. Thus we end up with Harmonics of Dual of an outside circle, i.e. Harmonics of a hyperbola.
\end{example}
\begin{example}(Dark Catenary)
We can, of course, consider various other formulas for the potential energy $U$ and compute the corresponding Catenary. Interesting choice is
$$
U=r^2,
$$
which would generate a variational problem that can be phrased as follows:

\textit{Tunnel of what shape we must dig through Earth so that a rope of given length can be hanged freely inside it without touching the walls?} 

The functional to minimize now takes form
$$
\mathcal{L}:=\inte{s_0}{s_1}\zav{r^2+\lambda }\dd s,
$$ 
with solution
$$
\frac{L}{p}=r^2+\lambda.
$$
Again, we can in no time produce corresponding dynamical system:
$$
{\bf \ddot x}=2\abs{\bf \dot x}{\bf x}.
$$

If we try to compute the Dual curve we obtain
$$
\frac{L}{p}=r^2+\lambda, \qquad \stackrel{D}{\longrightarrow} \qquad \frac{1}{Lp^2}=r-\frac{\lambda}{L},
$$
which can reinterpreted as a solution to
$$
{\bf \ddot x}=\frac12 \frac{{\bf x}}{r},
$$
i.e. evolution of a particle that experiences constant outward repulsing force -- a.k.a. dark energy. This might be seen as a special case of the so-called ``Dark Kepler problem'' introduced in \cite{Blaschke6}, where in addition to dark energy, influences of dark matter and regular matter are assumed.

Thus Dual of Dark Catenary (or Dark co-Catenary) gives us a particular solution to the Dark Kepler problem! 
\end{example}
\newpage
\section{Dipole drive}
Robert Zubrin came with the idea of ``Dipole drive'' \cite{Zubrin} -- i.e. a spacecraft that harness the solar wind for propulsion. Similarly to a Solar sail, Dipole drive does not carry any fuel but unlike a solar sail, it needs its own source of energy -- which would be a small on board nuclear generator. 

Dipole drive architecture calls for two large planar parallel meshes, separated by certain distance -- both of them are charged. One with positive charge and the other with negative charge of equal intensity. Together they generate an electric field $E$ between them which we will presume is uniform -- an approximation valid only for infinitely large screens. Similarly, we will assume that the net outside electric field is zero. 

Charged particles present in the solar wind would therefore enter the area between the screens unobstructed (we assume that the gaps are so large that any collisions might be ignored) and will be accelerated by the present electric field. The protons will be pushed towards the negatively charged screen and electrons the other way around. The energy change for both particles is the same but since a proton is 1836 times more massive it will gain about 42 times the momentum of an electron. Thus a net thrust is generated. The resulting force on a spacecraft is in the direction of the electric filed $E$, which is perpendicular to the screens.

The following picture illustrates the situation. 

\begin{center}
\includegraphics[scale=1]{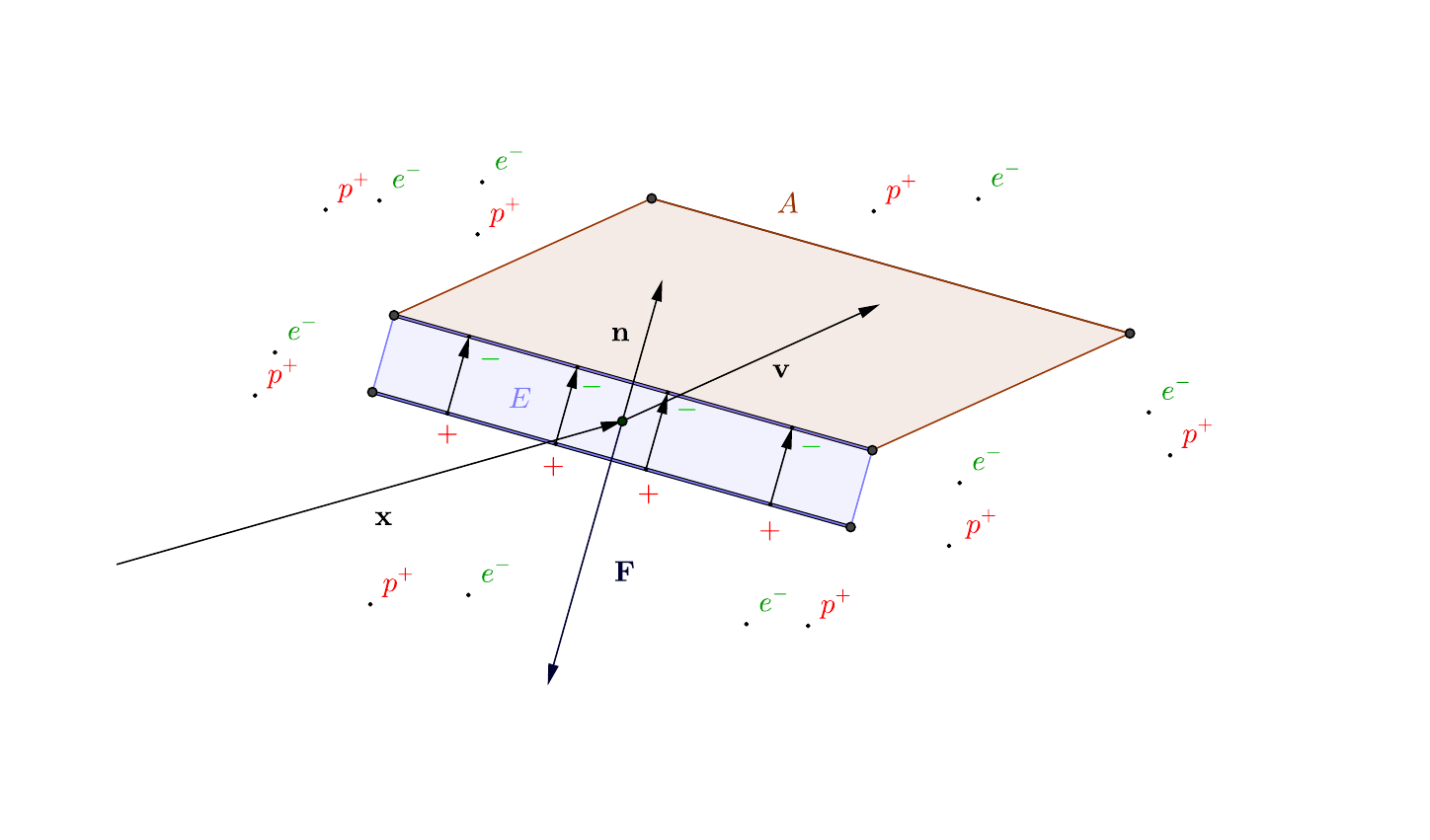}
\end{center}

In order to derive equation of motion of the Dipole drive we will shift to a moving frame of reference of the solar wind near the spacecraft -- i.e. in this frame protons and electrons are (locally) motionless, while the spacecraft is moving with the velocity ${\bf v}$. The amount of thrust $\abs{\bf F}$ is proportional to the density of solar wind $\rho(r)$ which we will assume depends only on the distance from the Sun $r$ and on the area $A$ which spacecraft ``scoops'' as it goes. This area, in turn, is proportional to the dot product of velocity vector ${\bf v}$ and the screen normal vector ${\bf n}$. Hence we get:
$$
\abs{\bf F}=\alpha \rho(r) {\bf v}\cdot {\bf n}.
$$
(We are assuming that both vectors ${\bf n}, {\bf v}$ lies in the same plane as ${\bf x}$ does. This will ensure that the resulting orbit is a planar curve and thus suitable for analysis in pedal coordinates.)

The equation of motion is thus
\begin{equation}\label{DDE1}
{\bf \ddot x}=-\frac{M}{r^3}{\bf x}+\alpha \rho(r) ({\bf v}\cdot {\bf n})\ {\bf n}.
\end{equation}

The velocity vector ${\bf v}$ comprises of two components -- the spacecraft's own speed ${\bf \dot x}$ (relative to the Sun) and the velocity of solar wind particles themselves -- which in our case are moving radially away form the Sun with speed $v(r)$ also depending only on the distance $r$.
$$
{\bf v}:={\bf \dot x}+v(r)\frac{{\bf x}}{r}.
$$
Substituting in (\ref{DDE1}) we get
\begin{equation}\label{DDE2}
{\bf \ddot x}=-\frac{M}{r^3}{\bf x}+\alpha \rho(r) \zav{{\bf \dot x}\cdot {\bf n}+\frac{v(r)}{r}{\bf x}\cdot {\bf n}}\ {\bf n}.
\end{equation}

The standard approximation to the values of $v(r),\rho(r)$ are
$$
v(r)\propto 1,\qquad \rho(r) \propto \frac{1}{r^2},
$$
which is valid only a certain distance away from Sun. We will only need a much more conservative assumption:
$$
v(r)\rho(r)\propto  \frac{1}{r^2},
$$
which has been observed from numerous measurements (see \cite{solarwind}).

\subsection{Orbits of Dipole drive pointing in direction perpendicular to motion}
Substituting 
$$
{\bf n}:=\frac{{\bf \dot x}^\perp}{\abs{\bf \dot x}},
$$
in (\ref{DDE2}) we get
\begin{equation}\label{DDE3}
{\bf \ddot x}=-\frac{M}{r^3}{\bf x}+\alpha \rho(r)v(r)\frac{{\bf x}\cdot {\bf \dot x}^\perp}{r\abs{\bf \dot x}^2}{\bf \dot x}^\perp=
-\frac{M}{r^3}{\bf x}+\alpha\frac{{\bf x}\cdot {\bf \dot x}^\perp}{r^3\abs{\bf \dot x}^2}{\bf \dot x}^\perp,
\end{equation}
which can be written using definition of $p$ as follows:
$$
{\bf \ddot x}=-\frac{M}{r^3}{\bf x}-\frac{\alpha p}{r^3} \frac{\dot {\bf x}^\perp}{\abs{\dot{\bf x}}}.
$$
Since the thrust $\abs{\bf F}$ of our Dipole drive in this mode falls as a square of the distance (exactly the same as the force of gravity) we will write the constant of proportion $\alpha$ as follows:
$$
\alpha=\sigma M,
$$  
which signifies how the force compares with gravity. If $\abs{\sigma}<1$ the thrust is weaker. If $\abs{\sigma}=1$ it is exactly equal the force of gravity and so on.

Finally, we obtain:
\begin{equation}\label{DDE4}
{\bf \ddot x}=-\frac{M}{r^3}{\bf x}-\frac{\sigma M p}{r^3} \frac{\dot {\bf x}^\perp}{\abs{\dot{\bf x}}}.
\end{equation}

This is exactly in the usable form for Theorem \ref{T2} and thus using (\ref{T2F3}) we obtain: 
$$
\frac{-2\sigma M \int\frac{p^2}{r^2}{\rm d}r}{p^2}=\frac{2M}{r}+c.
$$
Differentiating out the integral we obtain a separable differential equation with the solution:
$$
\frac{L^2}{p^2}=\zav{\frac{2M}{r}+c}^{1-\sigma}.
$$
 
Particular solutions include:
\begin{enumerate}
\item For $\sigma=0$ (i.e. screens are without charge and no force is generated) we obtain the usual Kepler problem whose solutions are focal conic sections. 
\item For $\sigma=1$ we obtain the equation of a line $p=L$. In fact, the trajectory is a line segment which the Dipole drive traverse in periodical motion from one end to the other. But there is no way to distinguish between the two in pedal coordinates.
\item For $\sigma=-1$ we obtain 
$$
\frac{L^2}{p^2}=\zav{\frac{2M}{r}+c}^2.
$$
which is precisely the central Catenary (Example \ref{ExCC})!
\item For $\sigma=2$ (i.e. the thrust generated is exactly twice as big as the gravity force) we obtain
$$
\frac{L^2}{p^2}=\zav{\frac{2M}{r}+c}^{-2}, \qquad \Rightarrow\qquad \frac{L}{p}=\frac{1}{\sqrt{\frac{2M}{p}+c}},
$$
which is \textit{exactly} central Brachistochrone (Example \ref{CBex})!
\item For $\sigma=\frac12$ we obtain 
$$
\frac{L^2}{p^2}=\sqrt{\frac{2M}{r}+c}, \qquad \Rightarrow\qquad
\frac{L^4}{p^4}=\frac{2M}{r}+c,
$$
which can be reinterpreted back using Theorem \ref{T2} as a solution to the equation
$$
{\bf \ddot x}=-\frac{M}{r^3} \frac{{\bf x}}{\abs{\bf \dot x}^2},
$$
i.e. to the force problem when the force varies not only inversely as a square of the distance but also inversely as a square of the speed.

\end{enumerate}

Of course the last case can be generalized to all the values $\sigma$ but $\sigma\not =1$:
$$
\frac{L^2}{p^2}=\zav{\frac{2M}{r}+c}^{1-\sigma} \qquad \Rightarrow\qquad \zav{\frac{L}{p}}^{\frac{2}{1-\sigma}}=\frac{2M}{r}+c,
$$
which s equivalent (by Theorem \ref{T2}) to solutions of
$$
{\bf \ddot x}=-\frac{M}{r^3}\frac{{\bf x}}{\abs{\bf \dot x}^{\frac{2\sigma}{1-\sigma}}}.
$$ 
\subsection{Complementary system}
Interestingly, if we replace in (\ref{DDE3}) the vector ${\bf \dot x}^\perp$ by ${\bf \dot x}$ we obtain a dynamical system of the form:
$$
\ddot {\bf x}=-\frac{M}{r^3}{\bf x}+\frac{\sigma M p_c}{r^3} \frac{{\bf \dot x}}{\abs{\bf \dot x}}.
$$
This has nothing to do with trajectories of a Dipole drive. But using  $\dot {\bf x}=\frac{\abs{\dot {\bf x}}}{p_c} {\bf x}+\frac{p}{p_c}{\dot {\bf x}}^\perp$
we obtain
$$
\ddot {\bf x}=-\frac{M(1-\sigma)}{r^3}{\bf x}+\frac{\sigma M p}{r^3} \frac{\dot {\bf x}^\perp}{\abs{\dot{\bf x}}},
$$
which is exactly the same equation! Hence using Theorem \ref{T2} we get nearly the same curves:
$$
\frac{L^2}{p^2}=\zav{\frac{2M(1-\sigma)}{r}+c}^{\frac{1}{1-\sigma}}.
$$

\subsection{Connections} The example of Dipole drive shows how many connections between various dynamical systems can be deduced from Theorem \ref{T2}. Taking, for instance, $\sigma=-1$ in (\ref{DDE4}), we can establish that the following list of dynamical systems have solutions with same orbits (as far as pedal coordinates are concerned, that is):
\begin{align*}
{\bf \ddot x}&=-\frac{M }{r^3}{\bf x}+\frac{Mp}{r^3}\frac{{\bf \dot x}^\perp}{\abs{\bf \dot x}},\\
{\bf \ddot x}&=-\frac{2M }{r^3}{\bf x}+\frac{Mp_c}{2r^3}\frac{{\bf \dot x}}{\abs{\bf \dot x}},\\
{\bf \ddot x}&=-\frac{M }{r^3}\abs{\bf \dot x} {\bf x}.
\end{align*}
Additionally, these curves also solves the central Catenary problem!
\newpage
\section{Solar Sail orbits}
Solar sail, as a concept, was first mentioned in a note send by Johannes Kepler to Galileo in 1610. Kepler has noticed that the tail of a comet always points directly away from the Sun as if blown away by some ``heavenly breeze''. He then suggests to harness this breeze as a means of propulsion. Later it was  recognized that the source of this force is the pressure of the light itself. 

For our purposes, a solar sail will be a perfectly planar object (a sail) that is perpendicular to the plane of motion - in other words, the normal vector of the sail is located in the plane of motion. The single most important quantity we can attached to a solar sail is its specific area $A_{sp}:=\frac{A}{m}$ (i.e. reflective area divided by its total mass). 

Since the pressure of the light falls with a square of the distance -- exactly the same as the force of gravity -- there is a unique value of $A_{sp}$ where the force generated by the sail exactly cancels the gravity of Sun. 

Sufficiently large and lightweight solar sails thus posses unique ability among almost all other spacecrafts that they can ``hover'' above the Sun or move in a straight line relative to the Sun -- mode of transportation unheard of in celestial mechanics. 

The quantity $A_{sp}$ will enter our calculation as a ratio $\sigma:=\frac{F_{reflective}}{F_{gravity}}$ of reflective and gravity forces -- which is called ``lightness number''. 

Value $\sigma=1$ thus indicates that a spacecraft in question posses the ``hover'' capability, while $\sigma=0$ corresponds to a spacecraft with no sail. And so on.


\begin{center}
\definecolor{yqqqqq}{rgb}{0.5019607843137255,0.,0.}
\definecolor{qqqqff}{rgb}{0.,0.,1.}
\definecolor{ffdxqq}{rgb}{1.,0.8431372549019608,0.}
\definecolor{cczzqq}{rgb}{0.8,0.6,0.}
\definecolor{ttqqqq}{rgb}{0.2,0.,0.}
\definecolor{uuuuuu}{rgb}{0.26666666666666666,0.26666666666666666,0.26666666666666666}
\begin{tikzpicture}[line cap=round,line join=round,>=triangle 45,x=1.0cm,y=1.0cm,scale=0.7]
\clip(1.82,1.88) rectangle (14.56,9.72);
\draw [line width=2.pt,color=ttqqqq] (8.04,2.66)-- (12.04,7.32);
\draw [line width=0.8pt,dash pattern=on 3pt off 3pt,domain=1.82:14.56] plot(\x,{(-63.4134--4.*\x)/-4.66});
\draw [line width=1.6pt,color=ffdxqq,domain=1.8200000000000032:10.040000000000001] plot(\x,{(-75.45117113343022--6.800743352883156*\x)/-1.4372160061088817});
\draw [->,line width=2.pt,color=yqqqqq] (10.04,4.99) -- (11.908270105738739,3.3863346731856314);
\draw [->,line width=1.6pt,color=ffdxqq] (10.04,4.99) -- (3.1,5.38);
\begin{scriptsize}
\draw [fill=uuuuuu] (8.04,2.66) circle (1.5pt);
\draw [fill=uuuuuu] (12.04,7.32) circle (1.5pt);
\draw [fill=uuuuuu] (10.04,4.99) circle (1.5pt);
\draw[color=uuuuuu] (11.54,5.35) node {Solar sail};
\draw [fill=ffdxqq] (3.1,5.38) circle (2.5pt);
\draw[color=ffdxqq] (3.78,5.89) node {To the sun};
\draw [fill=uuuuuu] (10.04,4.99) circle (1.5pt);
\draw [fill=qqqqff] (8.60278399389112,11.790743352883158) circle (2.5pt);
\draw[color=ffdxqq] (10.3,9.51) node {reflected light};
\draw[color=yqqqqq] (11.2,4.43) node {$F$};
\end{scriptsize}
\end{tikzpicture}
\end{center}

Let us denote ${\bf x}$ the position of the center of mass of a solar sail, ${\bf \dot x}$ its velocity and ${\bf n}$ the normal vector of the sail chosen so that it points \textit{away} from the Sun -- that is the quantity $\cos\eta:=\frac{{\bf x}\cdot {\bf n}}{\abs{{\bf x}}}\geq 0$.

The force generated by the sail:
\begin{align*}
F_{r}&\propto \frac{\zav{\cos \eta}^2}{r^2}, 
\end{align*}
is proportional to $\cos^2\eta$, since the apparent area of the sail as seen from the Sun goes as $\cos\eta$ and also only a portion of the light's momentum is transfered to the craft (which introduces the second $\cos\eta$ factor). For more details see \cite{solarsail}. 

Equations of motion are:
\begin{align}
\label{reflective}\ddot {\bf x}&=-\frac{M}{r^3} {\bf x}+\frac{\sigma M \zav{{\bf x}\cdot {\bf n}}^2}{r^4} {\bf n}. 
\end{align}
The quantity $M$ is the mass of the Sun multiplied by gravity constant $G$.

We will handle 3 different choices for ${\bf n}$:
\begin{align}
\label{prograde}{\bf n}&= \frac{ {\bf \dot x}}{\abs{ {\bf \dot x}}}s.  & s&:=\sgn\zav{\frac{{\bf x}\cdot{\bf \dot x}}{r\abs{\bf \dot x}}} & \text{Prograde direction}\\
\label{normal}{\bf n}&= \frac{ {\bf \dot x}^\perp}{\abs{ {\bf \dot x}}}s, & s&:=\sgn\zav{\frac{{\bf x}\cdot{\bf \dot x}^\perp}{r\abs{\bf \dot x}}}  & \text{Normal direction}\\
\label{constantangle}{\bf n}&=  \frac{\cos\alpha {\bf x}+\sin\alpha {\bf x}^\perp}{\abs{\dot {\bf x}}}. & &  & \text{Constant angle}
\end{align}

The $s$ factor is present  to ensure that the thrust vector of the sail points always away from the Sun (since a solar sail cannot be pushed toward the Sun by the light it emits).

\subsection{Orbits of reflective  solar sail pointing pointing in normal direction}
Substituting (\ref{normal}) into (\ref{reflective}) we get
$$
\ddot {\bf x}=-\frac{M}{r^3}{\bf x}+\frac{\sigma M}{r^2} \frac{\dot {\bf x}^\perp}{\abs{\dot{\bf x}}}\zav{\frac{{\bf x}\cdot \dot {\bf x}^\perp}{r \abs{\dot{\bf x}}}}^2 s^3.
$$
The equation becomes singular for zero speed ${\bf \dot x}=0$. The trajectory must be therefore computed piecewise between such events. We fill focus on one such piece which allows us to drop the $s^3$ factor. Using also the definition of pedal coordinate $p$ we obtain:
$$
\ddot {\bf x}=-\frac{M}{r^3}{\bf x} +\frac{\sigma M p^2}{r^4} \frac{\dot {\bf x}^\perp}{\abs{\dot{\bf x}}}.
$$
Applying Theorem \ref{T2} we get:
$$
\frac{\sigma M \int \frac{p^3}{r^3}{\rm d}r}{p^2}=\frac{M}{r}+c.
$$
Differentiating out the integral we obtain a differential equation with solution:
$$
\frac{L}{p}=\sqrt{c+\frac{M}{r}}+\frac{\sigma L}{r}+\frac{2\sigma c L}{M}.
$$
This can be viewed as a special energy solution to 
$$
{\bf \ddot x}=-\zav{\frac{M}{\sqrt{c+\frac{M}{r}}}+\tilde \sigma}\frac{\abs{\bf \dot x}}{r^3}{\bf x},\qquad \tilde \sigma:=L\sigma,
$$
provided 
$$
E:=\abs{\dot x}-\sqrt{c+\frac{M}{r}}-\frac{\tilde \sigma}{r}=\frac{2\tilde\sigma c}{M}.
$$

The same equation can be also viewed as a solution to a variational problem of finding an extremum of
$$
\inte{s_0}{s_1}\sqrt{c+\frac{M}{r}}\dd s,\qquad \text{under constrain}\qquad
\inte{s_0}{s_1}\zav{\frac{1}{r}+\frac{2c}{M}}\dd s=l.
$$
Notice that we have interpreted the quantity $\sigma L$ as a Lagrange multiplier.
\bigskip

The other choices for ${\bf n}$ leads to equation that cannot be solved easily in pedal coordinates. In the rest of the paper we will therefore present a possible way how to deal with this situation. 
\newpage
\section{Nonlocal variables}
Let us define for every point ${\bf x}$ on a planar curve $\gamma$ the following  quantities:
\begin{align*}
x&:={\bf x}\cdot (1,0),& y&:={\bf x}\cdot (0,1)&\text{cartesian coordinates,}\\
r&:=\abs{\bf x} & \varphi&:=\arctan \frac{y}{x} &\text{polar coordinates,}\\
p&:=\frac{{\bf \dot x}\cdot {\bf x}^\perp}{\abs{\bf \dot x}} & p_c&:=\frac{{\bf \dot x}\cdot {\bf x}}{\abs{\bf \dot x}} & \text{pedal coordinates,}\\
\theta&:=\arctan (y'_x) & \psi&:=\theta-\varphi=\arctan\frac{p}{p_c} & \text{tangential and polar tangential angle,}\\
\kappa&:=\frac{{\bf \ddot x}\cdot {\bf \dot x}^\perp}{\abs{\bf \dot x}^3} & \rho &:=\frac{1}{\kappa} &  \text{curvature and radius of curvature,}\\
A&:=\frac12\int r^2\dd \varphi & s&:=\int \sqrt{r^2+{r'_\varphi}^2}\dd \varphi & \text{area and arc-length,}
\end{align*}
The following picture roughly illustrates the definitions:
\begin{center}
\definecolor{qqwuqq}{rgb}{0.,0.39215686274509803,0.}
\definecolor{ccqqqq}{rgb}{0.8,0.,0.}
\definecolor{uuuuuu}{rgb}{0.26666666666666666,0.26666666666666666,0.26666666666666666}
\definecolor{qqttqq}{rgb}{0.,0.2,0.}
\definecolor{qqccqq}{rgb}{0.,0.8,0.}
\definecolor{qqqqcc}{rgb}{0.,0.,0.8}
\definecolor{qqqqff}{rgb}{0.,0.,1.}
\begin{tikzpicture}[line cap=round,line join=round,>=triangle 45,x=5.0cm,y=5.0cm]
\clip(-1.3608126664560387,-0.4398735533344589) rectangle (0.7217295117430589,0.9541316445386482);
\draw [shift={(0.,0.)},line width=0.8pt,color=qqwuqq,fill=qqwuqq,fill opacity=0.10000000149011612] (0,0) -- (0.:0.08465618610565437) arc (0.:137.7624531741471:0.08465618610565437) -- cycle;
\draw [shift={(-1.1344354204835139,0.)},line width=0.8pt,color=qqwuqq,fill=qqwuqq,fill opacity=0.10000000149011612] (0,0) -- (0.:0.08465618610565437) arc (0.:38.62180036381115:0.08465618610565437) -- cycle;
\draw [shift={(-0.530986145108566,0.48210326687752686)},line width=0.8pt,color=qqwuqq,fill=qqwuqq,fill opacity=0.10000000149011612] (0,0) -- (-42.2375468258529:0.08465618610565437) arc (-42.2375468258529:38.621800363811204:0.08465618610565437) -- cycle;
\draw [line width=0.8pt,color=qqqqcc,domain=-1.3608126664560387:0.7217295117430589] plot(\x,{(--5.954347205620074--5.2487317463009395*\x)/6.569844235773278});
\draw [shift={(0.,0.)},line width=0.8pt,color=qqccqq,fill=qqccqq,fill opacity=0.10000000149011612]  (0,0) --  plot[domain=1.926375257564634:2.9483484469905936,variable=\t]({0.9245444493130677*0.8824301163685412*cos(\t r)+-0.3810742201256813*0.6913162486860159*sin(\t r)},{0.3810742201256813*0.8824301163685412*cos(\t r)+0.9245444493130677*0.6913162486860159*sin(\t r)}) -- cycle ;
\draw [shift={(0.,0.)},line width=2.pt,color=qqttqq]  plot[domain=1.414163194714933:2.9483484469905936,variable=\t]({0.9245444493130677*0.8824301163685412*cos(\t r)+-0.3810742201256813*0.6913162486860159*sin(\t r)},{0.3810742201256813*0.8824301163685412*cos(\t r)+0.9245444493130677*0.6913162486860159*sin(\t r)});
\draw [line width=0.8pt] (-0.530986145108566,0.48210326687752686)-- (0.,0.48210326687752686);
\draw [line width=0.8pt] (0.,0.48210326687752686)-- (0.,0.);
\draw [line width=0.8pt,dash pattern=on 3pt off 3pt,color=ccqqqq] (0.1232875152316123,-0.33685192407079584) circle (5.241091554714864cm);
\draw [->,line width=0.8pt] (-0.530986145108566,0.48210326687752686) -- (-0.1299963255537937,0.8024591109903051);
\draw [->,line width=0.8pt,color=ccqqqq] (-0.530986145108566,0.48210326687752686) -- (0.1232875152316123,-0.33685192407079584);
\draw [line width=0.8pt,color=qqqqcc] (-0.44197242775452794,0.5532174527494554)-- (0.,0.);
\draw [line width=0.8pt,color=qqqqcc] (-0.08865647156446121,-0.07156136035364863)-- (0.,0.);
\draw [line width=0.8pt] (-0.530986145108566,0.48210326687752686)-- (0.,0.);
\draw [line width=0.8pt,domain=-1.3608126664560387:0.7217295117430589] plot(\x,{(-0.-0.*\x)/2.1344354204835136});
\begin{scriptsize}
\draw [fill=qqqqff] (0.,0.) circle (2.5pt);
\draw [fill=black] (-0.8512525492985398,-0.20726689484536898) circle (0.5pt);
\draw [fill=black] (-0.530986145108566,0.48210326687752686) circle (2.5pt);
\draw[color=black] (-0.5824695341013797,0.5127554302628324) node {${\bf x}$};
\draw[color=qqccqq] (-0.5509351527119452,0.09487135556625731) node {$A$};
\draw [fill=black] (-0.13295114927282456,0.6837842815309425) circle (0.5pt);
\draw[color=qqttqq] (-0.7936162195481544,0.20774627037379634) node {$s$};
\draw [fill=uuuuuu] (0.,0.48210326687752686) circle (1.5pt);
\draw[color=black] (-0.2207760268998931,0.5127554302628324) node {$x$};
\draw[color=black] (0.052945641508389374,0.2133900161141733) node {$y$};
\draw [fill=black] (0.1232875152316123,-0.33685192407079584) circle (0.5pt);
\draw[color=black] (0.25894236103214835,-0.3368751935725795) node {${\bf x''}$};
\draw [fill=black] (-0.1299963255537937,0.8024591109903051) circle (0.5pt);
\draw[color=black] (-0.1389417136644272,0.8370239204258264) node {${\bf x'}$};
\draw[color=ccqqqq] (-0.19790111209235392,-0.10548161821712448) node {$\rho=\frac{1}{\kappa}$};
\draw [fill=black] (-0.44197242775452794,0.5532174527494554) circle (0.5pt);
\draw[color=qqqqcc] (-0.1897354253278198,0.2952243293496391) node {$p$};
\draw [fill=black] (-0.08865647156446121,-0.07156136035364863) circle (0.5pt);
\draw[color=qqqqcc] (0.054356577943483614,-0.12382379187334959) node {$p_c$};
\draw[color=black] (-0.21513228115951613,0.1315557028787075) node {$r$};
\draw[color=qqwuqq] (0.1291362090034783,0.0722963726047495) node {$\varphi$};
\draw [fill=uuuuuu] (-1.1344354204835139,0.) circle (1.5pt);
\draw[color=qqwuqq] (-0.99631754816681437,0.027146406681733876) node {$\theta$};
\draw[color=qqwuqq] (-0.3900883991112018,0.425030481378309) node {$\psi$};
\end{scriptsize}
\end{tikzpicture}
\end{center}
Notice that quantities $s,A$ are written as indefinite integrals which corresponds to the freedom of choice of  the starting point from which the arc-length resp. area is measured. These quantities represents what we will call ``nonlocal variables'' since their value depends not only on the immediate vicinity of the point ${\bf x}$ but also on the curve's ``history''.

Number of these variables can be written in pedal coordinates as nonlocal coordinates (i.e. indefinite integrals):
\begin{lemma}
With the notation as above we have
\begin{align*}
\varphi&=\int\frac{p}{rp_c}\dd r, & \theta&=\int\frac{1}{p_c}\dd p, & s&=\int\frac{r}{p_c}\dd r,\\
A&=\frac12\int \frac{rp}{p_c}\dd r, & \int r^\alpha \dd \varphi &= \int \frac{r^{\alpha-1}p}{p_c}\dd r.
\end{align*}
\end{lemma}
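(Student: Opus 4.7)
The plan is to reduce every formula to a change of variable $t \mapsto r$ along the curve, using the two identities already derived inside the proof of Proposition~\ref{P2}, namely
\[
\dot r = \frac{p_c\,|\dot{\mathbf{x}}|}{r}, \qquad \dot\varphi = \frac{p\,|\dot{\mathbf{x}}|}{r^2}.
\]
From the first one we read off $\dd t = \dfrac{r\,\dd r}{p_c\,|\dot{\mathbf{x}}|}$, which is the conversion factor that will turn every ``time'' or ``$\varphi$'' integral into a pure pedal integral.

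First I would handle $\varphi$, $s$ and $A$ together. Dividing $\dot\varphi$ by $\dot r$ gives $\dd\varphi = \dfrac{p}{r p_c}\,\dd r$, which establishes the $\varphi$ formula and, by multiplying both sides by $r^\alpha$, the last formula of the lemma. Arc length satisfies $\dd s = |\dot{\mathbf{x}}|\,\dd t = \dfrac{r}{p_c}\,\dd r$ using the conversion factor, and the area formula follows from the polar identity $\dd A = \tfrac12 r^2\,\dd\varphi = \tfrac12 \dfrac{rp}{p_c}\,\dd r$.

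The one genuinely new step is the formula for the tangential angle $\theta$. Since $\kappa = \dd\theta/\dd s$, I need the classical pedal-coordinate identity
\[
\frac{\dd p}{\dd r} = r\kappa,
\]
equivalently $\rho = r\,\dd r/\dd p$. I would derive this quickly by differentiating $p = ({\bf x}^\perp\!\cdot\dot{\bf x})/|\dot{\bf x}|$ in $t$ and using the definitions of $\kappa$ and $p_c$: a direct computation yields $\dot p = p_c\,\kappa\,|\dot{\bf x}|$, which combined with $\dot r = p_c|\dot{\bf x}|/r$ gives $\dd p/\dd r = r\kappa$ at once. Then
\[
\dd\theta = \kappa\,\dd s = \frac{1}{r}\frac{\dd p}{\dd r}\cdot\frac{r}{p_c}\,\dd r = \frac{\dd p}{p_c},
\]
which is the required expression.

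The computations are all short; the only place requiring a moment of care is the derivation of $\dot p = p_c\kappa|\dot{\bf x}|$, because one has to differentiate a quotient involving $|\dot{\bf x}|$ and let the $\dot{\bf x}\cdot\ddot{\bf x}$ terms cancel against the pieces coming from $\dot{\bf x}^\perp\!\cdot\dot{\bf x}=0$. Once that identity is in hand, every other formula drops out mechanically from the single conversion $\dd t = r\,\dd r/(p_c|\dot{\bf x}|)$.
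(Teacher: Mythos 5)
Your proof is correct and follows essentially the same route as the paper, which simply invokes $r'_\varphi=\tfrac{rp_c}{p}$ from Proposition~\ref{P2} and leaves the rest implicit; your conversions $\dd\varphi=\tfrac{p}{rp_c}\dd r$ and $\dd s=\tfrac{r}{p_c}\dd r$ are exactly what that one-liner encodes. You are in fact slightly more complete than the paper on the $\theta$ formula, since you explicitly derive the needed identity $p'_r=r\kappa$ (via $\dot p=p_c\kappa\abs{\dot{\bf x}}$), which the paper only cites later as a ``famous result''.
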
 
\begin{proof}
This stems from the fact
$$
r'\varphi=\frac{rp_c}{p},
$$
see Proposition \ref{P2}.
\end{proof}
Also, using the Dual transform 
$$
D:\qquad {\bf x}\to \tilde {\bf x}:= \frac{{\bf x}'}{{\bf x}'\cdot {\bf x}^\perp},
$$ 
where the differentiation is with respect to the arc-length $s$, we can define for every variable $q$ its dual counterpart $q^\star$:
\begin{lemma}
\begin{align*}
x^\star&:=\frac{x'}{p}, & y^\star&:=\frac{y'}{p}, & r^\star&:=\frac{1}{p},\\
p^\star&:=\frac{1}{r}, & p_c^\star&:=-\frac{p_c}{rp}, & \varphi^\star&:=\theta, \\
\theta^\star&:=\varphi, & \psi^\star&:=\psi, & \kappa^\star&:=\zav{\frac{p}{r}}^3\rho, \\
\rho^\star &:=\zav{\frac{r}{p}}^3\kappa,  & A^\star&:=\frac12\int \frac{1}{p_c p^2}\dd p,  & s^\star&:=\int \frac{r}{p^2 p_c}\dd p.  
\end{align*}
\end{lemma}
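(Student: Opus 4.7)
The plan is to read off every identity from the defining formula $\tilde{\bf x} := {\bf x}'/({\bf x}'\cdot{\bf x}^\perp) = {\bf x}'/p$, where the prime denotes arc-length differentiation and the equality uses the definition of $p$. The first row ($x^\star, y^\star, r^\star$) is then immediate: the components of $\tilde{\bf x}$ are $x'/p$ and $y'/p$, and $|\tilde{\bf x}| = 1/|p|$, which in the signed convention adopted here is written $r^\star = 1/p$. The pedal entries $p^\star = 1/r$ and $p_c^\star = -p_c/(rp)$ are just the already-established action of $D = IP$ on pedal coordinates from the Dual-transform subsection, so no new work is required.

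For the angle entries I would compute $\tilde{\bf x}'$ explicitly, using Frenet's relation ${\bf x}'' = \kappa {\bf x}^{\prime\perp}$ together with the orthonormal-frame decomposition ${\bf x} = p_c{\bf x}' - p{\bf x}^{\prime\perp}$ (which follows by projecting ${\bf x}$ onto the basis $\{{\bf x}',{\bf x}^{\prime\perp}\}$ and recognising $p_c$ and $-p$ as the coefficients). A short calculation then gives $p' = \kappa p_c$ and
$$
\tilde{\bf x}' \;=\; \frac{\kappa{\bf x}^{\prime\perp}}{p} - \frac{{\bf x}'\,p'}{p^2} \;=\; -\frac{\kappa}{p^2}{\bf x}.
$$
Thus $\tilde{\bf x}$ is parallel to ${\bf x}'$ and the tangent of $D(\gamma)$ is parallel to ${\bf x}$, which instantly yields $\varphi^\star = \theta$, $\theta^\star = \varphi$, and hence $\psi^\star = \psi$. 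The same formula produces the Jacobian $d\tilde s = (\kappa r/p^2)\,ds$, which I will need for the arc-length entry.

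For the curvature identities I would invoke the classical pedal-coordinate formula $\rho = r\,dr/dp$ (an immediate consequence of $p_c^2 = r^2 - p^2$ and the geometry of the osculating circle). Substituting $\tilde r = 1/p$ and $\tilde p = 1/r$ gives $d\tilde r/d\tilde p = (r/p)^2\,dp/dr$, from which $\rho^\star = (r/p)^3\kappa$ and $\kappa^\star = (p/r)^3\rho$ fall out algebraically. For $A^\star$ and $s^\star$ I would simply substitute the already-derived dual values $\tilde r = 1/p$, $\tilde p = 1/r$, $\tilde p_c = -p_c/(rp)$, and $d\tilde r = -dp/p^2$ into the integral formulas $A = \tfrac{1}{2}\int rp/p_c\,dr$ and $s = \int r/p_c\,dr$ from the preceding lemma; the factors collapse cleanly to $\tfrac{1}{2}\int dp/(p_c p^2)$ and $\int r\,dp/(p^2 p_c)$ respectively.

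The only genuine nuisance will be orientation bookkeeping: because the paper uses signed $p$ and $p_c$, the identities $r^\star = 1/p$ and $\varphi^\star = \theta$, $\theta^\star = \varphi$, $\psi^\star = \psi$ implicitly fix an orientation on $D(\gamma)$ compatible with the one used on $\gamma$ (the naive ratio $\tan\psi^\star = p^\star/p_c^\star$ produces $-\tan\psi$, and the sign disappears only after this orientation choice is made). Beyond this bookkeeping, no step is conceptually hard; the entire proof is a chain of direct computations organised around the single identification $\tilde{\bf x} = {\bf x}'/p$.
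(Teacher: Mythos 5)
Your proposal is correct. Note that the paper offers no proof of this lemma at all (it is explicitly ``left to the reader as an exercise''), so there is nothing to compare against; your argument is a legitimate and complete way to fill the gap. The computations all check: with arc-length parametrization one has $p={\bf x}^\perp\cdot{\bf x}'$, $p_c={\bf x}\cdot{\bf x}'$, ${\bf x}=p_c{\bf x}'-p{\bf x}'^\perp$, ${\bf x}''=\kappa{\bf x}'^\perp$ and $p'=\kappa p_c$, which indeed give $\tilde{\bf x}'=-\frac{\kappa}{p^2}{\bf x}$; this single identity delivers the three angle entries and the arc-length Jacobian, while the curvature entries follow from $\rho=r\,r'_p$ (a formula the paper itself quotes later) under the substitution $\tilde r=1/p$, $\tilde p=1/r$, and the $A^\star$, $s^\star$ entries follow by substituting the dual values into the integral formulas of the preceding lemma. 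Your flag on the orientation bookkeeping is well taken and is a genuine feature, not a defect of your argument: the paper's own table gives $\tan\psi^\star=p^\star/p_c^\star=-p/p_c=-\tan\psi$, so the stated identity $\psi^\star=\psi$ (like $r^\star=1/p$ rather than $1/|p|$) only holds modulo the signed-distance and orientation conventions the paper adopts, exactly as you say.
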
 
\begin{proof} The proof is not difficult and it is left to the reader as an exercise.
\end{proof}

With these quantities in hand there is a hope that for some dynamical systems, the definite integrals in Theorem \ref{T2} can be \textit{interpreted} rather than computed -- similarly as we have interpreted the Euler-Lagrange equation for the elastic curve in the very beginning.
\begin{example} (Aerobraking) Consider a dynamical system:
\begin{equation}\label{aerobraking}
\ddot {\bf x}=-\frac{M}{r^3}{\bf x}-\alpha \abs{\dot{\bf x}}^2 \frac{\dot{\bf x}}{\abs{\dot{\bf x}}},
\end{equation}
i.e. Kepler problem with air drag present. This would correspond to the movement of a spacecraft, that is ``aerobraking'' that is using atmosphere of a planet to slow itself down (assuming that the atmosphere throughout the maneuver is of homogeneous density).

Using 
$$
\dot {\bf x}=\frac{\abs{\dot {\bf x}}}{p_c} {\bf x}+\frac{p}{p_c}{\dot {\bf x}}^\perp,
$$
we would find that this problem is actually \textit{not} suitable for Theorem \ref{T2}.

But! We can proceed differently: multiplying (\ref{aerobraking}) by ${\bf \dot x}^\perp$ we obtain
\begin{align*}
{\bf \ddot x}\cdot {\bf \dot x}^\perp&=-\frac{M}{r^3}{\bf x}\cdot {\bf \dot x}^\perp\\
\kappa\abs{\bf \dot x}^3&=\frac{M}{r^3}p \abs{\bf \dot x}\\
\kappa&=\frac{M}{r^3}\frac{p}{\abs{\bf \dot x}^2}=M\frac{ p^3}{r^3}\frac{1}{\zav{{\bf x}^\perp\cdot {\bf \dot x}}^2}\\
\zav{{\bf x}^\perp\cdot {\bf \dot x}}^2&=M\zav{\frac{p}{r}}^3\rho=M\kappa^\star.
\end{align*}
Similarly, multiplying (\ref{aerobraking}) by ${\bf x}^\perp$ we have
\begin{align*}
{\bf \ddot x}\cdot {\bf x}^\perp&=-\alpha \abs{\bf \dot x}{\bf \dot x}\cdot {\bf x}^\perp\\
\frac{\partial_t\zav{{\bf \dot x}\cdot {\bf x}^\perp}}{{\bf \dot x}\cdot {\bf x}^\perp}&=-\alpha \abs{\bf \dot x}\\
\intertext{using the obvious $\dot s=\abs{\bf \dot x}$ we can integrate this to obtain}
{\bf \dot x}\cdot {\bf x}^\perp&=Le^{-\alpha s},
\end{align*}
where $L$ is the integration constant. Substituting into the previous result we finally get
\begin{equation}\label{aerosol}
M\kappa^\star=Le^{-2\alpha s}.
\end{equation}

This is thus the solution written in coordinates $(\kappa^\star,s)$. It roughly informs us that the curvature of the dual curve (which is, sadly, \textit{not} the curve of velocities in this case) decays exponentially with the arc-length.

Notice that these coordinate looks almost exactly like  ``Cesaro coordinates'' $(\kappa,s)$ which uses only intrinsic parameters of the curve independent on the outside frame of reference.

But intrinsic coordinates are useless in our context. We cannot hope to solve any dynamical system using only intrinsic quantities since the position of the origin (not an intrinsic parameter) is of chief importance. For example, intrinsic coordinates cannot differentiate between Kepler problem and Hook's law (since for both the solutions are conic sections).

But since we have obtained the \textit{dual} curvature $\kappa^\star$, which is not intrinsic and does depend on the position of the origin, there is no problem. Hence, the coordinates $(\kappa^\star, s)$ are closest thing to intrinsic coordinates we can ask for.

Conic sections enjoy particularly compact equation in these coordinates:
$$
\kappa^\star=c.
$$

Furthermore, a different form of the solution (\ref{aerosol}) is
$$
Me^{2\alpha s}=L\rho^\star.
$$
Integrating both sides with respect to $s$ we obtain:
\begin{equation}\label{aerosol2}
\frac{M}{2\alpha}e^{2\alpha s}+c=L\int \frac{r^3}{p^3 p_c}\dd p.
\end{equation}

This stems from the identity
$$
\int \rho^\star \dd s= \int \frac{r^3}{p^3}\kappa \frac{r}{p_c}\dd r=\int \frac{r^3}{p^3 p_c}\dd p,
$$
where we have use famous result
$$
\kappa=\frac{p'_r}{r},\qquad \rho=rr'_p,
$$
which can be readily verified from definitions of $\kappa,\rho, p,r$.

The quantity $\int \frac{r^3}{p^3 p_c}\dd p$ cannot be so easily interpreted as $\kappa^\star$ (it is a running total of dual of radius of curvature) but the equation (\ref{aerosol2}) has two free parameters present -- same number as pedal equations in Theorem \ref{T1} -- so in a sense, it is ''solved more fully''.

\end{example}
\begin{example} (Aerobraking+)
We can modify the previous example and allow the density of the atmosphere to depend on some power of $r$, i.e. we are considering now dynamical system in the form:
\begin{equation}
\ddot {\bf x}=-\frac{M}{r^3}{\bf x}-\alpha r^\beta \abs{\dot{\bf x}} \dot{\bf x}.
\end{equation}
In exactly the same way we will end up with solutions:
$$
\frac{M}{2\alpha}e^{2\alpha \int \frac{r^{\beta+1}}{p_c} \dd r}+c=L\int \frac{r^{\beta+3}}{p^3 p_c}\dd p.
$$
\end{example}

\begin{example}(Solar sail pointing prograde)
When a Solar sail is pointing prograde the equation of motion become
\begin{equation}\label{SSP}
\ddot {\bf x}=-\frac{M}{r^3}{\bf x}+\frac{\sigma M}{r^2} \zav{\frac{{\bf x}\cdot{\dot{\bf x}}}{\abs{\dot{\bf x}}r}}^2 \frac{\dot{\bf x}}{\abs{\dot{\bf x}}}.
\end{equation}
Using 
$$
\dot {\bf x}=\frac{\abs{\dot {\bf x}}}{p_c} {\bf x}+\frac{p}{p_c}{\dot {\bf x}}^\perp,
$$
This can be brought into a form suitable for Theorem \ref{T2} to obtain
\begin{equation}\label{SSPsolt2}
\frac{\sigma M\int\frac{p_c p^2}{r^3}{\rm d}r+L^2}{p^2}=\frac{M}{r}+\sigma M\int\frac{p_c}{r^3}{\rm d}r+c.
\end{equation}
But, similarly as in the previous example, there is no simple solution (or solution of any kind) known to the author. Furthermore, the nonlocal quantities $\int\frac{p_c p^2}{r^3}{\rm d}r,\int\frac{p_c}{r^3}{\rm d}r$ cannot be easily interpreted and, more importantly, variables $p,r$ are still present so the equation is not ``solved'' in any coordinate system. 

But we can use the other approach: multiplying (\ref{SSP}) by ${\bf \dot x}^\perp$ we obtain
\begin{align*}
{\bf \ddot x}\cdot {\bf \dot x}^\perp&=-\frac{M}{r^3}{\bf x}\cdot {\bf \dot x}^\perp\\
\kappa\abs{\bf \dot x}^3&=\frac{M}{r^3}p \abs{\bf \dot x}\\
\kappa&=\frac{M}{r^3}\frac{p}{\abs{\bf \dot x}^2}=M\frac{ p^3}{r^3}\frac{1}{\zav{{\bf x}^\perp\cdot {\bf \dot x}}^2}\\
\zav{{\bf x}^\perp\cdot {\bf \dot x}}^2&=M\zav{\frac{p}{r}}^3\rho=M\kappa^\star.
\end{align*}
On the other hand, multiplying (\ref{SSP}) by ${\bf x}^\perp$ yields:
\begin{align*}
{\bf \ddot x}\cdot {\bf x}^\perp&=\frac{\sigma M}{r^2} \zav{\frac{{\bf x}\cdot{\dot{\bf x}}}{\abs{\dot{\bf x}}r}}^2 \frac{\dot{\bf x}\cdot {\bf x}^\perp}{\abs{\dot{\bf x}}}.\\
\partial_t\zav{{\bf \dot x}\cdot {\bf x}^\perp}&=\frac{\sigma M p_c^2}{r^4}p\\
{\bf \dot x}\cdot {\bf x}^\perp\partial_t\zav{{\bf \dot x}\cdot {\bf x}^\perp}&=\frac{\sigma M p_c^2}{r^4}p{\bf \dot x}\cdot {\bf x}^\perp\\
\partial_t\zav{{\bf \dot x}\cdot {\bf x}^\perp}^2&=\frac{2\sigma M p_c^2}{r^4}p^2\abs{\bf \dot x}\\
\zav{{\bf \dot x}\cdot {\bf x}^\perp}^2&=2\sigma M\int\frac{p_c^2 p^2}{r^4}\dd s+c\\
\zav{{\bf \dot x}\cdot {\bf x}^\perp}^2&=2\sigma M\int\frac{p_c p^2}{r^3}\dd r+c.
\end{align*}
Combining these results we obtain
\begin{equation}\label{SSPsol}
M\kappa^\star=2\sigma M\int\frac{p_c p^2}{r^3}\dd r+c. 
\end{equation}


We can also produce an equation with two free parameters rewriting (\ref{SSPsol}) as follows:
$$
\frac{M}{2\sigma M\int\frac{p_c p^2}{r^3}\dd r+c}=\rho^\star,
$$ 
and integrating both sides with respect to $\int\frac{p_c p^2}{r^3}\dd r$:
$$
\frac{1}{2\sigma} \ln \zav{2\sigma M\int\frac{p_c p^2}{r^3}\dd r+c}=\int \frac{r^3}{p^3}\kappa\dd\zav{ \int\frac{p_c p^2}{r^3}\dd r}=\int \frac{r^3}{p^3} \frac{p_c p^2}{r^4}\dd p=\int\frac{p_c}{rp}{\rm d}p+c_2.
$$
Thus we get
\begin{equation}\label{SSPsol2}
2\sigma M\int\frac{p_c p^2}{r^3}\dd r+c=Le^{2\sigma\int \frac{p_c}{rp}\dd p}. 
\end{equation}
\end{example}
\subsection{Notation}
All the nonlocal variables that we have work with so far were in the form
$$
\int \frac{r^\alpha}{p^\beta p_c^\gamma}\dd r,\qquad or\qquad  \int \frac{r^\alpha}{p^\beta p_c^\gamma}\dd p.
$$
Let us formalize this observation:
\begin{definition}
Denote 
\begin{equation}\label{nonlocaldef}
R^\alpha_{\beta,\gamma}:=\int \frac{r^\alpha}{p^\beta p_c^\gamma}\dd r,\qquad  P^\alpha_{\beta,\gamma}:=\int \frac{r^\alpha}{p^\beta p_c^\gamma}\dd p.
\end{equation}
\end{definition}
Examples:
\begin{align*}
A&=\frac12 R^{1}_{-1,1}, & s&=R^{1}_{0,1}, & \varphi&=R^{-1}_{-1,1}, & \theta&=P^{0}_{0,1},\\
A^\star&=\frac12 P^{0}_{2,1}, & s^\star&=P^{1}_{2,1}, & \varphi^\star&=P^{0}_{0,1}, & \theta^\star&=R^{-1}_{-1,1},\\
\int r^\alpha{\rm d}\varphi&=R^{\alpha-1}_{-1,1}, & \int \rho^\star{\rm d}s&=P^{3}_{3,1}, & \int r^\alpha {\rm d}\theta&=P^\alpha_{0,1}.
\end{align*}
Solution to our examples can be written in this notation as follows:
\begin{align}
\text{Aerobraking}& & \ddot {\bf x}&=-\frac{M}{r^3}{\bf x}-\alpha \abs{\dot{\bf x}} \dot{\bf x}, &  \frac{M}{2\alpha}e^{2\alpha R^{1}_{0,1}}+c&=L P^{3}_{3,1}.\\
\text{Aerobraking+}& & \ddot {\bf x}&=-\frac{M}{r^3}{\bf x}-\alpha r^\beta \abs{\dot{\bf x}} \dot{\bf x}, &  \frac{M}{2\alpha}e^{2\alpha R^{\beta+1}_{0,1}}+c&=L P^{\beta+3}_{3,1}.\\
\text{Solar Sail prograde} & &
\ddot {\bf x}&=-\frac{M}{r^3}{\bf x}+\frac{\sigma M}{r^2} \zav{\frac{{\bf x}\cdot{\dot{\bf x}}}{\abs{\dot{\bf x}}r}}^2 \frac{\dot{\bf x}}{\abs{\dot{\bf x}}}, & 2\sigma M R^{-3}_{-2,-1}+c&=Le^{2\sigma P^{-1}_{1,-1}}. 
\end{align}
The point of this notation is that instead of trying to solve these equations in known coordinates, it might be more insightful to try understand the coordinates that a problem suggests.

This approach is definitely useless if brought too far, since \textit{ any} equation can be interpreted as a linear equation in sufficiently nice coordinates -- just denote its right hand side and left hand side as a new coordinates!

But nonlocal coordinates as we have defined them posses some beautiful properties, as we are about to show.

\subsection{Properties}
Nonlocal variables are closed under scaling $S_a$, dual $D$, power transform $M_a$ and pedal transform $P$:
\begin{lemma}
Let $\sigma:=\alpha-\beta-\gamma+1$ denote ``parameter excess'' and let $a>0$. Then the following holds: 
\begin{align*}
R^\alpha_{\beta,\gamma}&\stackrel{S_a}{\longrightarrow}a^{\sigma} R^{\alpha}_{\beta,\gamma}, & P^\alpha_{\beta,\gamma}&\stackrel{S_a}{\longrightarrow}a^{\sigma} P^{\alpha}_{\beta,\gamma}, & \sigma&\stackrel{S_a}{\longrightarrow} \sigma \\
R^\alpha_{\beta,\gamma}&\stackrel{D}{\longrightarrow}(-1)^{\gamma+1} P^{\alpha+1-\sigma}_{\beta+1+\sigma,\gamma}, & P^\alpha_{\beta,\gamma}&\stackrel{D}{\longrightarrow}(-1)^{\gamma+1}R^{\alpha-1-\sigma}_{\beta-1+\sigma,\gamma}, & \sigma&\stackrel{D}{\longrightarrow} -\sigma\\
R^\alpha_{\beta,\gamma}&\stackrel{M_{a}}{\longrightarrow}a R^{\alpha+(a-1)\sigma}_{\beta,\gamma}, & P^\alpha_{\beta,\gamma}&\stackrel{M_{a}}{\longrightarrow}(a-1)R^{\alpha-1+(a-1)\sigma }_{\beta-1,\gamma}+P^{\alpha+(a-1)\sigma}_{\beta,\gamma}, & \sigma&\stackrel{M_a}{\longrightarrow} a\sigma\\
R^\alpha_{\beta,\gamma}&\stackrel{P}{\longrightarrow}2 R^{\alpha+\sigma}_{\beta+\sigma,\gamma}- P^{\alpha+1+\sigma}_{\beta+1+\sigma,\gamma}, & P^\alpha_{\beta,\gamma}&\stackrel{P}{\longrightarrow}  R^{\alpha-1+\sigma}_{\beta-1+\sigma,\gamma}-2 P^{\alpha+\sigma}_{\beta+\sigma,\gamma} & \sigma&\stackrel{P}{\longrightarrow} \sigma.
\end{align*}
\end{lemma}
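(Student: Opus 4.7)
The plan is to verify each of the eight transformation rules by direct symbolic substitution into the defining integrals
\[
R^\alpha_{\beta,\gamma} = \int \frac{r^\alpha}{p^\beta p_c^\gamma}\,\dd r, \qquad P^\alpha_{\beta,\gamma} = \int \frac{r^\alpha}{p^\beta p_c^\gamma}\,\dd p,
\]
using the action of each transform on the pedal coordinates already recorded in Section~2: under $S_a$ one has $(r,p,p_c)\mapsto(ar,ap,ap_c)$; under $D$, $(r,p,p_c)\mapsto(1/p,\,1/r,\,-p_c/(rp))$; under $M_a$, $(r,p,p_c)\mapsto(r^a,\,r^{a-1}p,\,r^{a-1}p_c)$; and the pedal transform $P$ is encoded by the substitution rule $r\to r^2/p$, $p\to r$, $p_c\to rp_c/p$. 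In every case I substitute into both the integrand and the new integration variable, expand the differential by the chain rule, and read off the resulting summands as multiples of some $R^{\cdot}_{\cdot,\gamma}$ or $P^{\cdot}_{\cdot,\gamma}$. The rule for $\sigma=\alpha-\beta-\gamma+1$ in each line then follows by a short exponent count.

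Scaling is pure homogeneity: every factor contributes one power of $a$, giving the common weight $a^{\alpha-\beta-\gamma+1}=a^\sigma$ and preserving $\sigma$. For the dual, the sign $(-1)^{\gamma+1}$ arises as $(-1)^\gamma$ from $p_c^{-\gamma}$ times the extra $-1$ from $\dd(1/p)=-\dd p/p^2$ (or $\dd(1/r)=-\dd r/r^2$); inversion converts $\dd r$-integrals into $\dd p$-integrals and vice versa, so $R$ and $P$ swap, and flipping all exponents sends $\sigma\to-\sigma$. For $M_a$, the asymmetry between the $R$- and $P$-formulas is explained entirely by the differentials: $\dd(r^a)=ar^{a-1}\,\dd r$ is a single term, while $\dd(r^{a-1}p)=(a-1)r^{a-2}p\,\dd r+r^{a-1}\,\dd p$ has two, producing the $(a-1)R+P$ structure with exponent shift $(a-1)\sigma$ and yielding $\sigma\mapsto a\sigma$.

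The pedal case is the subtlest and is where I expect the main obstacle to lie. For $R^\alpha_{\beta,\gamma}$, the expansion $\dd(r^2/p)=(2r/p)\,\dd r-(r^2/p^2)\,\dd p$, after substituting the integrand, delivers the two summands $2R^{\alpha+\sigma}_{\beta+\sigma,\gamma}$ and $-P^{\alpha+1+\sigma}_{\beta+1+\sigma,\gamma}$ directly. For $P^\alpha_{\beta,\gamma}$, the naive substitution $\dd p\to \dd r$ gives only the single term $R^{\alpha-1+\sigma}_{\beta-1+\sigma,\gamma}$; to recast it in the symmetric $R-2P$ form stated in the lemma I will integrate by parts using a primitive of type $r^N/p^{M-1}$ together with the intrinsic identity $r\,\dd r=p\,\dd p+p_c\,\dd p_c$ (obtained by differentiating $r^2=p^2+p_c^2$), so that one $R$-piece is traded for a $P$-piece with the correct shifted indices. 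Keeping the exponent arithmetic consistent across all four transforms at once is the principal technical difficulty; once done, the $\sigma$-rules drop out from the weight calculus and the invariance $\sigma\to\sigma$ under $P$ is confirmed by a final parity check.
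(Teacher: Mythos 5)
Your method --- substituting the recorded coordinate actions of $S_a$, $D$, $M_a$ and $P$ into the defining integrals and expanding the differential of the new integration variable by the chain rule --- is exactly the intended one; the paper offers no proof of this lemma (it is ``left to the reader''), and your verifications of the scaling, dual and power-transform lines, as well as of the $R^\alpha_{\beta,\gamma}$ entry under the pedal transform, are correct, including the sign bookkeeping $(-1)^{\gamma+1}$ (one factor $(-1)^{\gamma}$ from $p_c\to -p_c/(rp)$ and one $-1$ from the differential) and the exponent counts that give $\sigma\to\sigma,\ -\sigma,\ a\sigma,\ \sigma$.

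The step that fails is your plan to upgrade the single term $R^{\alpha-1+\sigma}_{\beta-1+\sigma,\gamma}$, which direct substitution yields for $P^\alpha_{\beta,\gamma}$ under $P$ (since $\dd p_{\mathrm{old}}=\dd r_{\mathrm{new}}$ contributes no second summand), into the stated $R^{\alpha-1+\sigma}_{\beta-1+\sigma,\gamma}-2P^{\alpha+\sigma}_{\beta+\sigma,\gamma}$ by integration by parts. Integration by parts only trades one representation of an indefinite integral for another modulo exact terms and constants, whereas the two candidate answers differ by $2P^{\alpha+\sigma}_{\beta+\sigma,\gamma}$, a genuinely non-constant nonlocal variable; no such manipulation can close that gap. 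In fact your ``naive'' single term is the correct answer and the printed entry appears to be erroneous. Test it on $\theta=P^{0}_{0,1}$ (where $\sigma=0$): since the foot of the perpendicular from the origin to the tangent lies in the direction $\theta\mp\tfrac{\pi}{2}$, the tangential angle of a curve equals, up to an additive constant, the polar angle $\varphi=R^{-1}_{-1,1}$ of its pedal --- not $\varphi-2\theta$. The same conclusion follows by composing the (correct) pedal rule with circle inversion via $D=IP$: only the single-term version reproduces the lemma's own $D$-line, which you verified independently. You should therefore record the single-term formula and flag the discrepancy rather than try to force the symmetric-looking $R-2P$ pattern.
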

\begin{proof} The proof is an easy exercise and it is left to the reader. 
\end{proof}
The third column shows how various transform affect the parameter excess $\sigma$. Notice that the parameter $\gamma$ is not affected at all as though nonlocal variables with different $\gamma$ cannot be transformed to one another.
 
But since $p_c^2=r^2-p^2$ we have
\begin{equation}\label{nonlocconection}
R^\alpha_{\beta,\gamma-2}=R^{\alpha+2}_{\beta,\gamma}-R^{\alpha}_{\beta-2,\gamma},\qquad
P^\alpha_{\beta,\gamma-2}=P^{\alpha+2}_{\beta,\gamma}-P^{\alpha}_{\beta-2,\gamma}.
\end{equation}
There is also a connection formula for nonlocal variables with the same $\gamma$:
\begin{lemma}
It holds:
$$
(\alpha+1-\gamma)R^{\alpha}_{\beta,\gamma}-(\alpha-1)R^{\alpha-2}_{\beta-2,\gamma}=\frac{r^{\alpha-1}}{p^{\beta}p_c^{\gamma-2}}+\beta P^{\alpha+1}_{\beta+1,\gamma}-(\gamma+\beta-2)P^{\alpha-1}_{\beta-1,\gamma}+c,
$$
where $c\equiv c(\alpha,\beta,\gamma)$ is an integration constant generally dependent on the values of $\alpha,\beta,\gamma$. 

In particular we have:
\begin{align}
R^{1}_{1,1}&=\frac{p_c}{p}+P^{2}_{2,1}, & \alpha&=1,\quad \beta=1,\quad \gamma=1.\\
s&=p_c+P^{0}_{-1,1}, & \alpha&=1,\quad \beta=0,\quad \gamma=1.\\
2A&=pp_c-P^2_{0,1}+2P^{0}_{-2,1}, & \alpha&=1,\quad \beta=-1,\quad \gamma=1.\\
\psi&=\theta-\varphi=\frac{p}{r^2}p_c-2\zav{R^{-3}_{-3,1}-P^{-2}_{-2,1}}, & \alpha&=-1,\quad \beta=-1,\quad \gamma=1.\\
s^\star&=p^\star_c+R^{-2}_{-1,1}, & \alpha&=0,\quad \beta=1,\quad \gamma=1.\\
R^{-2}_{-2,1}&=\frac{p_c}{r}+P^{-1}_{-1,1}, & \alpha&=0,\quad \beta=0,\quad \gamma=1.
\end{align}
\end{lemma}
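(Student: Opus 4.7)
The identity is an integration-by-parts relation, so the plan is to differentiate the boundary expression $\frac{r^{\alpha-1}}{p^{\beta}p_c^{\gamma-2}}$ and recognize the result as a combination of the integrands of $R^{\bullet}_{\bullet,\gamma}$ and $P^{\bullet}_{\bullet,\gamma}$.

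First I would compute the total differential
\[
d\!\left(\frac{r^{\alpha-1}}{p^{\beta}p_c^{\gamma-2}}\right)
=\frac{(\alpha-1)r^{\alpha-2}}{p^{\beta}p_c^{\gamma-2}}\,dr
-\frac{\beta\,r^{\alpha-1}}{p^{\beta+1}p_c^{\gamma-2}}\,dp
-\frac{(\gamma-2)\,r^{\alpha-1}}{p^{\beta}p_c^{\gamma-1}}\,dp_c .
\]
The obstacle — really the only nontrivial point — is that the last term involves $dp_c$, which is not one of the two differentials appearing in the definitions of $R$ and $P$. This is where the defining relation $p_c^{2}=r^{2}-p^{2}$ enters: differentiating it yields $dp_c=(r\,dr-p\,dp)/p_c$, which immediately turns the $dp_c$ contribution into a $dr$-part with one extra factor of $r/p_c$ and a $dp$-part with one extra factor of $p/p_c$, i.e. all denominators becoming $p_c^{\gamma}$.

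Next I would use $p_c^{-(\gamma-2)}=(r^{2}-p^{2})\,p_c^{-\gamma}$ to convert the two remaining $p_c^{\gamma-2}$-terms into differences of $p_c^{\gamma}$-terms: schematically
\[
\frac{r^{\alpha-2}}{p^{\beta}p_c^{\gamma-2}}=\frac{r^{\alpha}}{p^{\beta}p_c^{\gamma}}-\frac{r^{\alpha-2}}{p^{\beta-2}p_c^{\gamma}},\qquad
\frac{r^{\alpha-1}}{p^{\beta+1}p_c^{\gamma-2}}=\frac{r^{\alpha+1}}{p^{\beta+1}p_c^{\gamma}}-\frac{r^{\alpha-1}}{p^{\beta-1}p_c^{\gamma}}.
\]
After collecting, the coefficient of $\frac{r^{\alpha}}{p^{\beta}p_c^{\gamma}}dr$ is $(\alpha-1)-(\gamma-2)=\alpha+1-\gamma$, the coefficient of $\frac{r^{\alpha-2}}{p^{\beta-2}p_c^{\gamma}}dr$ is $-(\alpha-1)$, the coefficient of $\frac{r^{\alpha+1}}{p^{\beta+1}p_c^{\gamma}}dp$ is $-\beta$, and the coefficient of $\frac{r^{\alpha-1}}{p^{\beta-1}p_c^{\gamma}}dp$ is $\beta+\gamma-2$.

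Finally, integrating both sides and reading off the definitions \eqref{nonlocaldef} gives
\[
\frac{r^{\alpha-1}}{p^{\beta}p_c^{\gamma-2}}
=(\alpha+1-\gamma)R^{\alpha}_{\beta,\gamma}-(\alpha-1)R^{\alpha-2}_{\beta-2,\gamma}
-\beta\,P^{\alpha+1}_{\beta+1,\gamma}+(\beta+\gamma-2)\,P^{\alpha-1}_{\beta-1,\gamma}+C,
\]
which, upon rearrangement with $c:=-C$, is exactly the claimed identity. The tabulated special cases then follow by plugging in the indicated values of $(\alpha,\beta,\gamma)$ and simplifying, occasionally using $s=R^{1}_{0,1}$, $\theta=P^{0}_{0,1}$, $A=\tfrac12 R^{1}_{-1,1}$, etc., together with $p_c^{2}=r^{2}-p^{2}$ when needed to match exponents; these are routine bookkeeping once the master identity is in hand.
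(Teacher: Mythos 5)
Your proof is correct and is essentially the paper's own integration-by-parts argument run in the reverse direction: the paper starts from $(\alpha+1)R^{\alpha}_{\beta,\gamma}$, integrates by parts in $p$, and must then shift the third index via $R^{\alpha}_{\beta,\gamma-2}=R^{\alpha+2}_{\beta,\gamma}-R^{\alpha}_{\beta-2,\gamma}$ and rename parameters, whereas you differentiate the boundary term $r^{\alpha-1}p^{-\beta}p_c^{-(\gamma-2)}$ directly, which lands immediately on the correct indices. Both arguments rest on exactly the same two facts --- the product rule and $p_c\,dp_c=r\,dr-p\,dp$ (equivalently $p_c^2=r^2-p^2$) --- and your coefficient bookkeeping checks out, so this is a valid and, if anything, slightly cleaner verification.
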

\begin{proof}
This is, essentially, the result of  an integration by parts. Start with
\begin{align*}
(\alpha+1)R^{\alpha}_{\beta,\gamma}&=\int \frac{(\alpha+1)r^{\alpha}}{p^\beta p_c^\gamma}\dd r=\int \frac{(\alpha+1)r^{\alpha} r'_p}{p^\beta p_c^\gamma}\dd p= \frac{r^{\alpha+1}}{p^\beta p_c^\gamma}-\int r^{\alpha}\zav{\frac{1}{p^\beta p_c^\gamma}}'_p\dd p\\
&=\frac{r^{\alpha+1}}{p^\beta p_c^\gamma}-\int r^{\alpha}\zav{-\frac{\beta}{p}-\frac{\gamma p_c'}{p_c}}\frac{r^{\alpha+1}}{p^\beta p_c^\gamma}\dd p\\
\intertext{using $p_c{p_c}'p=rr'-p$ we get}
&=\frac{r^{\alpha+1}}{p^\beta p_c^\gamma}-\int r^{\alpha}\zav{-\frac{\beta}{p}-\frac{\gamma rr'-p}{p^2_c}}\frac{r^{\alpha+1}}{p^\beta p_c^\gamma}\dd p\\
&=\frac{r^{\alpha+1}}{p^\beta p_c^\gamma}+\gamma \int\frac{r^{\alpha+2}}{p^\beta p_c^{\gamma+2}}\dd r -(\gamma+\beta)\int\frac{r^{\alpha+1}}{p^{\beta-1} p_c^{\gamma+2}}\dd p+\beta \int\frac{r^{\alpha+3}}{p^{\beta+1} p_c^{\gamma+2}}\dd p\\
&=\frac{r^{\alpha+1}}{p^\beta p_c^\gamma}+\gamma R^{\alpha+2}_{\beta,\gamma+2}-(\gamma+\beta)P^{\alpha+1}_{\beta-1,\gamma+2}+\beta P^{\alpha+3}_{\beta+1,\gamma+2}.
\end{align*}
Applying the identity (\ref{nonlocconection}) we can transform the factor $(\alpha+1)R^{\alpha}_{\beta,\gamma}$ so that it has the same $\gamma+2$ parameter as the rest of the non-local variables. Renaming the parameters finishes the proof.
\end{proof}

\begin{example} (Solar sail keeping constant angle with the Sun) For a solar sail which is moving so that the angle of incidence of the incoming light is kept the same, the equation of motion becomes:
\begin{equation}\label{CA}
\ddot {\bf x}=-\frac{M_1}{r^3}{\bf x}+\frac{M_2}{r^3}{\bf x}^\perp.
\end{equation}
The solar sail does not have to be even fully reflective in this case.  Scattering and absorption of some incoming light will produce only different ratio of $M_1,M_2$. This case was extensively studied in e.g. (\cite{solarsaillog1,solarsaillog2,solarsaillog3}). No conservation law was ever found and there is single known exact solution -- logarithmic spirals.

Using 
$$
{\bf x}^\perp=\frac{p}{p_c} {\bf x}+\frac{r^2}{p_c \abs{\dot {\bf x}}}{\dot {\bf x}}^\perp,
$$ 
we can bring the equation (\ref{CA}) into a form suitable for Theorem \ref{T2} and get:
\begin{equation}\label{CAsolt2}
\frac{M_2\int \frac{p}{p_c}{\rm d}r}{p^2}=\frac{M_1}{r}+M_2\int \frac{p}{r^2 p_c} {\rm d}r.
\end{equation}
of in Nonlocal coordinates:
\begin{equation}\label{CAsolt2nl}
\frac{M_2R^{0}_{-1,1}}{p^2}=\frac{M_1}{r}+M_2 P^{-2}_{-1,1}.
\end{equation}
 
Again, this is in no sense a solution, since there are, as of now, 4 variables -- or this can be seen as a second degree differential equation with no general solution (known to the author). But we can extract form this the logarithmic spiral solution. 

Remember that pedal equation of logarithmic spiral in the form $r=r_0 e^{\alpha \varphi}$ is $ p_c=\alpha p$ or $r^2=(1+\alpha^2)p^2$ (see Example \ref{Logspiralex}).

Substituting into (\ref{CAsolt2}) we get
$$
\frac{\alpha^2+1}{r^2}\zav{M_2 \frac{r}{\alpha}+c_1}=\frac{M_1}{r}-M_2\frac{1}{r\alpha}+c_2,
$$ 
where $c_1,c_2$ are integration constants.

We can see that solution exists only if $c_1=c_2=0$. That means that only if the initial conditions are right we obtain an algebraic equation on $\alpha$:
$$
M_2\alpha^2-M_1\alpha+2M_2=0,
$$ 
with two possible solutions:
$$
\alpha_{\pm}=\frac{M_1\pm\sqrt{M_1^2-8M_2^2}}{2M_2},\qquad M_2\not=0.
$$
For $M_2=0$ we are in Keplerian case and no logarithmic solution exists.
\bigskip

Going back to the general case. Multiplying (\ref{CA}) by ${\bf \dot x}^\perp$ we obtain
\begin{align*}
{\bf \ddot x}\cdot {\bf \dot x}^\perp&=-\frac{M_1}{r^3}{\bf x}\cdot {\bf \dot x}^\perp+\frac{M_2}{r^3}{\bf x}^\perp\cdot {\bf \dot x}^\perp\\
\kappa\abs{\bf \dot x}^3&=\zav{\frac{M_1}{r^3}p+\frac{M_2}{r^3}p_c} \abs{\bf \dot x}\\
\kappa&=\zav{\frac{M_1}{r^3}p+\frac{M_2}{r^3}p_c} \frac{1}{\abs{\bf \dot x}^2}\\
\zav{{\bf x}^\perp\cdot{\bf \dot x}}^2 &=\frac{M_1}{r^3}p^3\rho+\frac{M_2}{r^3}p_c p^2 \rho\\
\zav{{\bf x}^\perp\cdot{\bf \dot x}}^2 &=\zav{M_1+M_2\frac{p_c}{p}}\kappa^\star.
\end{align*}
Multiplying (\ref{CA}) by ${\bf x}^\perp$ yields:
\begin{align*}
{\bf \ddot x}\cdot {\bf x}^\perp&=\frac{M_2}{r}\\
\partial_t\zav{{\bf \dot x}\cdot {\bf x}^\perp}&=\frac{M_2}{r}\\
{\bf \dot x}\cdot {\bf x}^\perp\partial_t\zav{{\bf \dot x}\cdot {\bf x}^\perp}&=\frac{M_2}{r} {\bf \dot x}\cdot {\bf x}^\perp\\
\partial_t\zav{{\bf \dot x}\cdot {\bf x}^\perp}^2&=\frac{2M_2 p}{r}\abs{\bf \dot x}\\
\zav{{\bf \dot x}\cdot {\bf x}^\perp}^2&=2M_2\int \frac{p}{r}\dd s=2M_2\int \frac{p}{ p_c}\dd r+c=2M_2 P^{0}_{-1,1}+c.
\end{align*}
In total:
\begin{equation}\label{CAsol}
\zav{M_1+M_2\frac{p_c}{p}}\kappa^\star=2M_2 P^{0}_{-1,1}+c.
\end{equation}

Unless we wan to call the whole left hand side of this equation a new variable, there is still too many variables. The same approach as before will fix it:
\begin{align*}
\frac{1}{2M_2P^{0}_{-1,1}+c}&=\frac{\rho^\star}{\zav{M_1+M_2\frac{p_c}{p}}}\\
\intertext{Integrating with respect to $ P^{0}_{-1,1}:$}
\frac{1}{2M_2}\ln\zav{2M_2P^{0}_{-1,1}+c}&=\int \frac{\rho^\star}{\zav{M_1+M_2\frac{p_c}{p}}} \dd\zav{P^{0}_{-1,1}}=
\int \frac{r^2}{p^3\zav{M_1+M_2\frac{p_c}{p}}}\frac{p}{ p_c}\dd p\\
\frac{1}{2M_2}\ln\zav{2M_2P^{0}_{-1,1}+c}&=\int \frac{r^2}{p p_c\zav{M_1p+M_2p_c}}\dd p+c_2.
\end{align*} 
Thus we get
\begin{equation}\label{CAsol2}
2M_2P^{0}_{-1,1}+c=Le^{2M_2\int \frac{r^2}{p p_c\zav{M_1p+M_2p_c}}\dd p}.
\end{equation}
As we can see, the problem of Solar sail keeping constant angle with the Sun cannot be solved even in nonlocal coordinates.
\end{example}

\section{Acknowledgement}
The author would like to thank Filip Blaschke and Martin Blaschke for careful reading of the manuscript and suggesting numerous improvements.

\end{document}